\documentclass[letterpaper,12pt]{article}
\usepackage[utf8]{inputenc}
\usepackage{natbib,amsmath, color, caption, subfig, mathtools,amsfonts,amsthm, titlesec, setspace, authblk,amsfonts,graphicx}
\usepackage[letterpaper, margin=1in]{geometry}
\usepackage{epstopdf}
\newtheorem{proposition}{Proposition}
\pdfoutput=1

\doublespacing
\titleformat*{\section}{\large\bfseries}
\bibliographystyle{bbs}

\title{ \bf {Regression Adjustment for Noncrossing Bayesian Quantile Regression}}
\date{}
\setcounter{Maxaffil}{1}

\author{T. Rodrigues\footnote{CAPES Foundation, Ministry of Education of Brazil, Bras\'{i}lia - DF 70040-020, Brazil \\ 
\indent \;\, Communicating Author: {\tt t.rodrigues@unsw.edu.au}} \, and Y. Fan \\
School of Mathematics and Statistics, \\
University of New South Wales\\
Sydney 2052, Australia}

\begin{document}

\maketitle

\begin{abstract}
A two-stage approach is proposed to overcome the problem in quantile regression, where separately fitted curves for several quantiles may cross. 
The standard Bayesian quantile regression model is applied in the first stage, followed by a Gaussian process regression adjustment, which monotonizes the quantile function whilst borrowing strength from nearby quantiles. The two stage approach is computationally efficient, and more general than existing techniques. The method is shown to be competitive with alternative approaches via its performance in simulated examples. 

\noindent \textit{Key words}: Asymmetric Laplace distribution; Crossing quantile regression; Gaussian process regression; Monotonicity.

\end{abstract}

\newpage
\section{Introduction}

In many applications, interest lies in describing the effect of a set of covariates at the tail of the response distribution, which can be considerably different from their impact at the mean \citep{Koenker2005}. The need to have a wider picture of the conditional distribution of the response variable is the main reason for the origin and popularity of quantile regression methods. These models have been applied to many areas, including the environmental sciences; medicine; engineering and economics. Often in the context of risk assessment, where the tail of the distribution plays an important role. In many cases, quantile estimates at several different quantile levels are needed, so when estimation for each level is carried out separately, the monotonicity of the conditional quantiles can be violated, giving rise to the phenomenon of crossing quantile regression curves. This leads to difficulties for inference, since by definition, the conditional quantile function should be monotonically increasing. Figure~\ref{igg1} shows the estimated conditional quantile function using the Immunoglobulin-G (IgG) data set of \cite{Isaac1983}, with age as the regressor. The figure shows the quantile function at age 6. The dashed line shows the estimates obtained from separately fitting the quantile regression at several levels, and it is evident that the curve is not monotone, particularly at the extremal levels. The solid line demonstrates the correction obtained using regression adjustment for monotonicity proposed later in this article.

\begin{figure}[!ht]
 \centering
 \includegraphics[width=7.5cm,height=7.5cm,angle=-90]{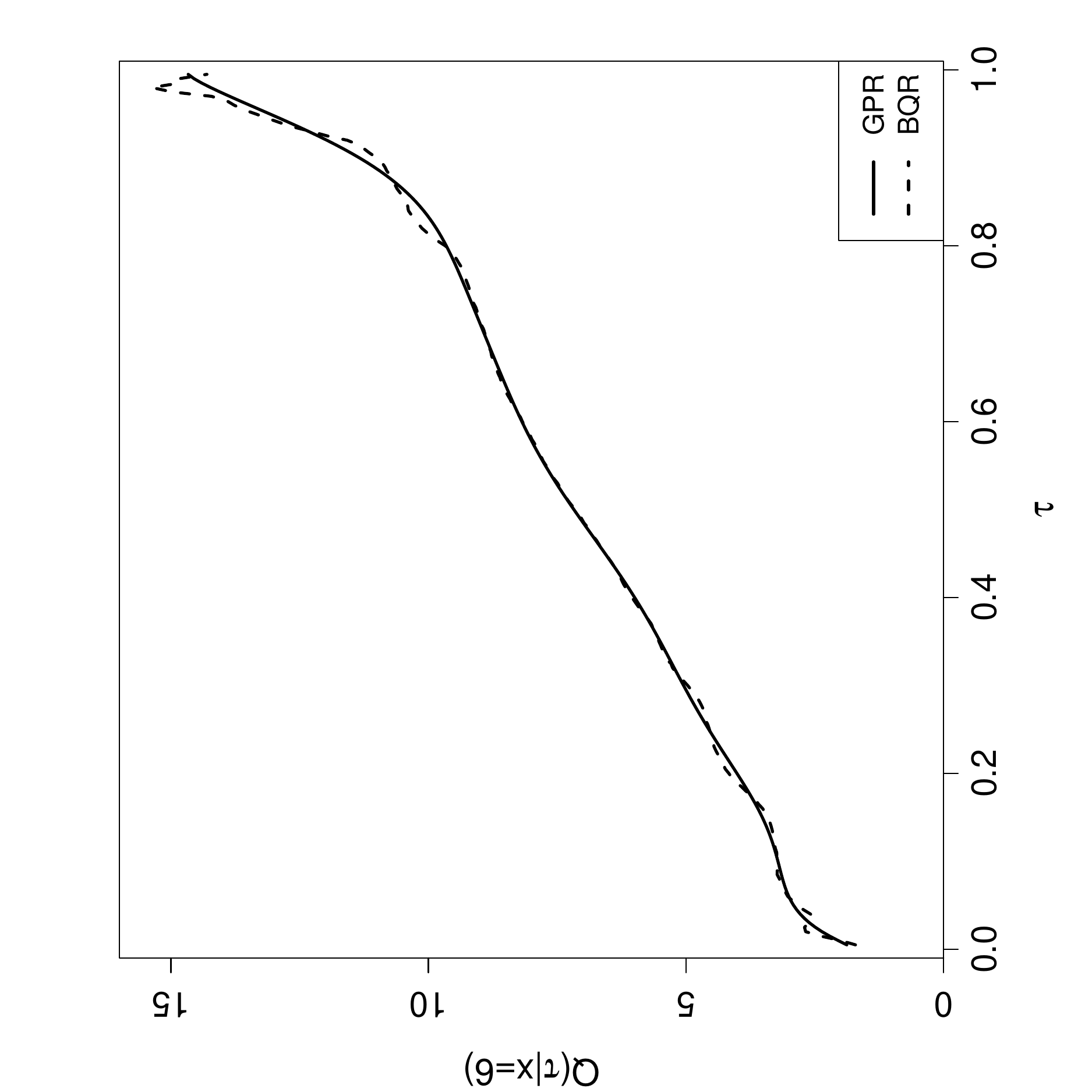} 
 \caption{Estimated conditional quantile function of serum concentration of IgG for children at age 6, using standard Bayesian quantile regression (dashed line, BQR) and regression adjusted estimates (solid line, GPR).%
 }
 \label{igg1}
\end{figure}

Quantile regression models were introduced in a seminal paper by \cite{Koenker1978}. By writing the sample quantiles as an optimisation problem, and generalising to the linear regression model, they proposed minimising the loss function 
\begin{equation*}
 \sum_{i=1}^{n}{\rho_{\tau}(y_i-\mathbf{X_i}^T\mathbf{\beta})},
\end{equation*}
where $\rho_{\tau}$ is the check function $\rho_{\tau}(u)=u(\tau-I(u<0))$. The solution to this minimisation yields the $\tau$-th quantile regression estimate. Most of frequentist literature in quantile regression is based on this estimator, which does not make probabilistic assumptions for the response variable. Consequently, inference for the parameters of interest relies on its asymptotic properties (see \cite{Koenker2005} and references therein). In the Bayesian framework,  motivated by its equivalence to the minimization problem, \cite{Yu2001} proposed the use of the asymmetric Laplace distribution (ALD) as an approximation to the likelihood. The approach is appealing since even when the true distribution of the data was not ALD, empirical results were satisfactory. \cite{Sriram2013} showed that under some mild conditions posterior consistency can be established for the linear quantile regression estimates based on the ALD, providing some theoretical support for the model.

A drawback of the above approaches is that, as quantiles are fitted separately, the conditional quantile curves are not smooth and the fitted regression lines may cross, which violates the basic probabilistic rule and causes problems for inference in practice. To overcome crossing, \cite{He1997} restricted the space of possible solutions of the response distribution to location-scale changes of a base distribution to get noncrossing curves. \cite{Yu1998} and \cite{Hall1999}, among others, proposed to estimate the conditional distribution function nonparametrically. \cite{Dette2008} and \cite{Victor2009} proposed to monotonize an estimated conditional distribution function and then invert it to obtain the quantiles. A clever approach was presented by \cite{Bondell2010} to simultaneously estimate several quantile levels by solving the constrained minimisation problem. They showed that in the linear case, the number of constraints can be greatly reduced, leading to an efficient estimation algorithm. The method is extendable to linear splines.

Most solutions to noncrossing quantiles in the Bayesian literature proceed by simultaneously fitting several quantiles. In the case of linear quantile regression, including spatially correlated data, \cite{Reich2011} were able to simplify the noncrossing constraints by writing the quantile process using Bernstein basis polynomials. Nevertheless, the likelihood does not have a closed form and for moderately sized data sets the proposed method is infeasible, so an adjustment to the classical estimates is also suggested. For linear regression with a single predictor, or a single index model, \cite{Surya2012} suggested a reparametrization of the quantile function that induces monotonicity. Recently, an ingenious solution for linear quantile regression was presented by \cite{Reich2013}, where the quantile function is modelled piece-wise using the linear heteroscedastic model. While many of these approaches work well in certain instances, they lack flexibility for more complex situations. Alternative Bayesian nonparametric methods have also been proposed (see \cite{Scaccia2003} and \cite{Taddy2010}, among others). They specify flexible error distributions using Bayesian nonparametric techniques, but model simplicity is generally compromised. As an alternative, \cite{Dunson2005}, and similarly \cite{Lancaster2010}, proposed a substitution likelihood for approximate simultaneous quantile inference.

In this article, we adopt a fully Bayesian approach, thus inference does not rely on asymptotic results, which can be particularly delicate in the quantile crossing context where sample sizes may not be big enough. We propose a two-stage approach for the simultaneous estimation of quantile regression at multiple levels. The first stage uses standard Bayesian quantile regression with ALD, fitted separately at different quantile levels. This takes advantage of the flexibility of modelling afforded by fitting quantiles for a specific level. These initial estimates are then adjusted by borrowing strength across nearby quantiles using Gaussian process regression in the second stage. The noncrossing constraints are controlled through a single parameter and no MCMC is needed in the second stage, resulting in a simple and efficient algorithm. 
The advantages of our approach are that our proposed method produces smoother estimates of the conditional quantile functions (see Figure~\ref{igg1}), and can handle complex modelling situations through the use of ALD in the first stage inference.

The rest of the article is organized as follows. Section 2 introduces the two-stage model for adjusted Bayesian quantile regression, along with its properties and estimation procedure. Simulations are performed in Section 3 to extensively compare the performance of the proposed method with the best current solutions. In Section 4, we analyse two real examples, the famous data set of serum concentration of immunoglobulin-G for young children and the global mean sea level time series. The final section offers some concluding discussions.

\section{Bayesian Quantile Regression Adjustment}
This section addresses the two-stage approach for noncrossing Bayesian quantile regression. We first present standard Bayesian quantile regression with ALD in Section 2.1, and then describe the second stage Gaussian process regression adjustment in Section 2.2.

\subsection{First stage: Standard Bayesian quantile regression}
Let $\mathbf{y}=(y_1, ..., y_n)$ be a vector of observed random variables 
from an unknown true distribution $f$. Associated with each observation, we have a $k$-dimensional vector of covariates $\mathbf{X}_i \in \mathbb{R}^k$. For quantile level $0<\tau<1$, the $\tau$-th quantile regression model is given by
\begin{equation}
 y_i | \mathbf{X}_i = h_{\tau}(\mathbf{X}_i) + \epsilon_i, \quad i=1, \cdots, n
 \label{model}
\end{equation}
where $\epsilon_i$ are independent from a distribution whose $\tau$-th quantile equals to $0$, i.e. $P(\epsilon_i \leq 0)=\tau$, and $h_{\tau}$ is an arbitrary function of the covariates.

The standard Bayesian estimation procedure is employed in the first stage. Here we consider the use of the auxiliary likelihood given  by the ALD($p$) distribution as an approximation to the true likelihood where
\begin{equation}
L_a(\mathbf{y}|\boldsymbol{\mu},\sigma,p)=\frac{p^n(1-p)^n}{\sigma^n}\exp{\left\{- \sum_i{ \rho_{p}\left(\frac{y_i- \mu_i}{\sigma}\right)} \right\}},
\label{ald}
\end{equation}
where the mean $\mu_i\equiv h_{\tau}(\mathbf{X}_i)$ and scale parameter $\sigma$ are estimated. The value $p$ corresponds to the value of the quantile. If we are interested in the $\tau$th quantile, the ALD($p=\tau$) is fitted to provide a good approximation for the true quantile function $Q(\tau|x)$ locally at $\tau$. For several quantile levels, $\tau=p_1,\ldots p_P$, the procedure is carried out for each value of $\tau$ using their respective ALD distributions. It is in this last procedure that the phenomenon of crossing quantile curves are often observed, as the curves are fitted independently, using different likelihood functions.

Posterior estimation of unknown parameters for a given $\tau$ is typically obtained via MCMC methods. For the $\tau$-th conditional quantile under auxiliary model ALD($p=\tau$), let $Q^{(t)}(\tau|\mathbf{X},p=\tau), t=1,...,T$, denotes the $t$-th posterior quantile estimate given by the MCMC samples.
Following Bayesian model averaging (BMA) approach, the $\tau$-th quantile point estimate is given by    
\begin{equation}
\widehat{Q}_s(\tau|x)= \frac{1}{T}\sum_{t=1}^T{Q^{(t)}(\tau|\mathbf{X},p=\tau)},
\label{post_mean}
\end{equation}
where the index $s$ denotes the standard estimate. The standard formulation of the first stage encompasses any model written as \eqref{model} and \eqref{ald}, our simulations examples later will consider both linear and non-linear regression models.

\subsection{Second stage: Gaussian process regression adjustment}
Standard quantile estimate \eqref{post_mean} is computed based only on the MCMC samples from the auxiliary model ALD($p=\tau$). However, as adjacent quantiles are correlated, it is expected that other auxiliary models ALD($p=\tau'$) will contain useful information for $\tau$ if $\tau'$ is nearby. Since in the first stage we have fitted $P$ auxiliary models ALD($p$), $p=p_1, ..., p_P$, we can calculate the $induced$ $\tau$-th quantile posterior sample for any given auxiliary model from
 \begin{equation}
  Q^{(t)}(\tau|\mathbf{X},p) = F^{-1}(\tau; \mu^{(t)},\sigma^{(t)},p) =
  \begin{dcases}
   \mu^{(t)} + \frac{\sigma^{(t)}}{1-p}\log{\left(\frac{\tau}{p}\right)},    & \text{if } 0\leq \tau \leq p \\
   \mu^{(t)} - \frac{\sigma^{(t)}}{p}\log{\left(\frac{1-\tau}{1-p}\right)}, & \text{if } p\leq \tau \leq 1
  \end{dcases} \; ,
  \label{induced}
\end{equation}
which is the quantile function of the fitted auxiliary model ALD($\mu^{(t)},\sigma^{(t)},p$) \citep{Yu2005}.
Equation \eqref{induced} provides us with $(P{-}1){\times}T$ additional posterior samples for the quantile at $\tau$ (assuming that one of the $p_1,\ldots p_P$ values equals to $\tau$).
A smoother noncrossing quantile estimate can then be obtained by borrowing strength from these induced $\tau$-th quantile posterior samples using Gaussian process regression on all $P{\times}T$ estimates \citep{GPbook}, 
\begin{equation} 
\label{gp_model}
\begin{aligned}
Q^{(t)}(\tau|\mathbf{X}&,p) = g(p)+ \epsilon \\
\epsilon &\sim \mathcal{N}(0,\Sigma) \\
g(p) &\sim \mathcal{GP}(0,K) ,
\end{aligned}
\end{equation}
where $\Sigma$ and $K$ are covariance matrices with dimension $(P{\times}T,P{\times}T)$. More 
specifically, $\Sigma$ is a diagonal covariance matrix whose diagonal entries are the posterior 
variances of the corresponding $Q^{(t)}(\tau|\mathbf{X},p)$, denoted by 
$\sigma^2(\tau|\mathbf{X},p)$. Independence is assumed between samples of the induced models as 
these were obtained independently, and approximate independence within each model is also attained 
by taking every $m$th MCMC sample. Moreover, as auxiliary models ALD(p), for p close to $\tau$, 
carry more information about the $\tau$-th quantile than more distant ones, we build the Gaussian 
process covariance matrix as a decreasing function of the distance between the models. Hence, using 
the squared exponential kernel, covariance matrix entries between induced quantiles from any two 
auxiliary models are given by
\begin{equation}
k(p,p')=\sigma^2_k\exp{\left\{-\frac{1}{2b^2}(p-p')^2 \right\} },
\label{kernel}
\end{equation}
where $b$ is the bandwidth and $\sigma^2_k$ is a variance hyperparameter of the prior. As we are centering the prior arbitrarily on zero, $\sigma^2_k$ should be large to result in an uninformative prior. However, choosing values too big may lead to computational issues while inverting matrices. Throughout the article, we adopted $\sigma^2_k=100$, results were not sensitive to any moderately large value of $\sigma^2_k$.

The Gaussian process prior over the posterior mean function $g(p)$ allows us to model nonparametrically the embedded correlation structure, using information from other auxiliary models to adjust the standard posterior mean \eqref{post_mean}. The final $\tau$-th quantile estimate will then be the adjusted posterior mean for standard auxiliary model ALD($p=\tau$), denoted by $\widehat{Q}_a(\tau|x)$, which is very simple to obtain as closed form predictive posterior distribution is available \citep{GPbook},
\begin{align*}
Q_*(\tau|\mathbf{X},p_*{=}\tau, \mathbf{Q}^{(t)}) &\sim \mathcal{N} (\mu_*, \sigma_*^2) ,
\end{align*}
where
\begin{subequations}
\begin{align}
\mu_*&= \widehat{Q}_a(\tau|x) = \sum_{p=1}^P \sum_{t=1}^T w_{p} Q^{(t)}(\tau|\mathbf{X},p) , \label{final_quant} \\
\sigma_*^2 &= k(\tau,\tau) - W K(.,\tau) + \sigma^2(\tau|\mathbf{X},p=\tau), \label{final_var} \\
W &= K(.,\tau)^{\top} (K + \Sigma)^{-1} , \label{weights}
\end{align}
\end{subequations}
also $K(.,\tau)$ is a covariance matrix column where $p'=\tau$ and $w_{p}$ is an element of the row vector of weights $W \in \mathbb{R}^{PT}$.

Regardless of the number of quantiles being estimated and their initial values, noncrossing quantile estimates can always be obtained from this Gaussian process regression adjustment, as stated in Proposition \ref{prop1}.

\begin{proposition}
\label{prop1}
Let $\widehat{Q}_a(\tau|x)$ be the adjusted $\tau$-th quantile estimate given in \eqref{final_quant}. Then for any set of quantiles $\tau_1 < ... < \tau_P$, $P \in \mathbb Z_+^*$, there always exist a bandwidth $b$ such that $\widehat{Q}_a(\tau_1|x) \leq ... \leq \widehat{Q}_a(\tau_P|x)$.
\end{proposition}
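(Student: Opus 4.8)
The plan is to exploit a monotonicity fact already contained in \eqref{induced}: although the standard point estimates may cross, \emph{each individual induced posterior draw} $Q^{(t)}(\tau\mid\mathbf X,p)$ is, by construction, the quantile function of a genuine $\mathrm{ALD}(\mu^{(t)},\sigma^{(t)},p)$ law, hence a \emph{strictly} increasing function of $\tau$ on $(0,1)$ — differentiating the two branches of \eqref{induced} gives the positive derivatives $\sigma^{(t)}/[(1-p)\tau]$ and $\sigma^{(t)}/[p(1-\tau)]$, which agree at $\tau=p$. So for every fixed pair $(p,t)$ and any $\tau_1<\dots<\tau_P$ one has $Q^{(t)}(\tau_1\mid\mathbf X,p)<\dots<Q^{(t)}(\tau_P\mid\mathbf X,p)$. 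Since \eqref{final_quant} writes $\widehat{Q}_a(\tau_j\mid x)=\sum_{p}\sum_{t}w_p\,Q^{(t)}(\tau_j\mid\mathbf X,p)$ as a linear combination of these sequences, monotonicity of $\widehat{Q}_a$ in $j$ would follow immediately if the weights were nonnegative \emph{and} did not depend on the level $\tau_j$ being predicted. Neither holds for an arbitrary bandwidth, but I would show that both are recovered in the limit $b\to\infty$ and then transfer the conclusion to a finite $b$ by continuity.

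First I would let $b\to\infty$ in the squared-exponential kernel \eqref{kernel}. Each entry $k(p,p')=\sigma_k^2\exp\{-(p-p')^2/2b^2\}$ converges to $\sigma_k^2$, so $K_b$ converges to the rank-one matrix $\sigma_k^2\mathbf 1\mathbf 1^{\top}$ of size $PT\times PT$ and, for every target level $\tau_j$, the cross-covariance column $K_b(\cdot,\tau_j)$ converges to $\sigma_k^2\mathbf 1$. Because $\Sigma$ is a diagonal matrix of strictly positive posterior variances, $K_b+\Sigma$ is positive definite for every $b\in(0,\infty]$, hence invertible, and matrix inversion is continuous there; consequently the weight vectors $W^{(j)}(b)=K_b(\cdot,\tau_j)^{\top}(K_b+\Sigma)^{-1}$ from \eqref{weights} all converge, as $b\to\infty$, to the \emph{same} limiting vector $W^{\infty}=\sigma_k^2\mathbf 1^{\top}(\sigma_k^2\mathbf 1\mathbf 1^{\top}+\Sigma)^{-1}$, independent of $j$. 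A single Sherman--Morrison step then gives $W^{\infty}=\bigl(1+\sigma_k^2\,\mathbf 1^{\top}\Sigma^{-1}\mathbf 1\bigr)^{-1}\sigma_k^2\,\mathbf 1^{\top}\Sigma^{-1}$, all of whose entries are strictly positive since $\Sigma^{-1}$ is diagonal with positive diagonal; write $w_p^{\infty}>0$ for these limiting weights.

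Next I would pass from the limit to a finite bandwidth. Factoring the $w_p^{\infty}$ out of the sum over $t$,
\[
\lim_{b\to\infty}\Bigl(\widehat{Q}_a(\tau_j\mid x)-\widehat{Q}_a(\tau_{j-1}\mid x)\Bigr)=\sum_{p}w_p^{\infty}\sum_{t}\Bigl(Q^{(t)}(\tau_j\mid\mathbf X,p)-Q^{(t)}(\tau_{j-1}\mid\mathbf X,p)\Bigr)>0
\]
for $j=2,\dots,P$, because each $w_p^{\infty}$ is strictly positive and each bracketed difference is strictly positive. Since $b\mapsto\widehat{Q}_a(\tau_j\mid x)$ is continuous on $(0,\infty]$ — it is the inner product of a fixed data vector with the continuous-in-$b$ vector $K_b(\cdot,\tau_j)^{\top}(K_b+\Sigma)^{-1}$ — there is a threshold $b_0$ beyond which all $P-1$ consecutive differences stay positive, so every $b\ge b_0$ gives $\widehat{Q}_a(\tau_1\mid x)\le\dots\le\widehat{Q}_a(\tau_P\mid x)$; in particular such a $b$ exists, which is the claim. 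The case $P=1$ is vacuous.

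I expect the construction to be essentially forced once the $b\to\infty$ limit is written down; the one point needing care is the last step, namely checking that the limiting inequalities are \emph{strict} rather than merely weak — this is why I need $\sigma^{(t)}>0$ (so that the ALD quantile function is strictly, not just weakly, increasing in $\tau$) together with the strict positivity of $W^{\infty}$ — before continuity can be invoked to descend to a finite $b$. A secondary technical point worth stating explicitly is that $K_b+\Sigma$ remains invertible for every $b\in(0,\infty]$, which is exactly what makes the weights, and hence $\widehat{Q}_a(\tau_j\mid x)$, well defined and continuous throughout the argument.
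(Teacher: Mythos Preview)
Your argument follows the same route as the paper's: send $b\to\infty$, observe that each induced draw $Q^{(t)}(\tau\mid\mathbf X,p)$ is an honest ALD quantile function and hence strictly increasing in $\tau$, and conclude that a nonnegatively weighted, level-independent combination of such functions is itself monotone. The paper's proof occupies two sentences and simply asserts that in the limit the weights are all $1/(PT)$, so that $\widehat Q_a$ reduces to a plain average of monotone functions.

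You are more careful than the paper in two places. First, your Sherman--Morrison computation gives the actual limiting weights $w_p^\infty\propto 1/\sigma^2(\tau\mid\mathbf X,p)$; the paper's equal-weight claim would require $\Sigma\propto I$, but only positivity and level-independence of the weights are needed for the argument, and you establish both. Second, you supply the continuity step to descend from $b=\infty$ to a finite bandwidth, together with the strictness needed to make that step work; the paper leaves this implicit (``generally achieved for moderately large $b$'').

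One residual subtlety neither proof confronts: by the paper's own specification the diagonal of $\Sigma$ consists of the posterior variances $\sigma^2(\tau\mid\mathbf X,p)$, which depend on the target level $\tau$. Thus in principle $\Sigma=\Sigma(\tau_j)$ varies across levels, and your limiting vector $W^\infty(\tau_j)$ need not be the same for all $j$, which is exactly the level-independence your last display relies on. Your proof (like the paper's) tacitly treats $\Sigma$ as fixed; to be fully rigorous one would either fix $\Sigma$ across levels by convention or argue separately that the $\tau$-dependence of $\Sigma$ cannot overturn the strict inequalities obtained in the limit.
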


\begin{proof}
As $b \rightarrow \infty$, weights are all equal and the adjusted $ \tau$-th quantile estimate \eqref{final_quant} can be written as
\begin{align*}
\mu_*= \widehat{Q}_a(\tau|x) =  \frac{1}{P \times T}\sum_{p=1}^P \sum_{t=1}^T Q^{(t)}(\tau|\mathbf{X},p).
\end{align*}
Being a simple average of nondecreasing quantile functions $Q^{(t)}(\tau|\mathbf{X},p)$, the final quantile function is also nondecreasing for $b \rightarrow \infty$. In fact, for the same reason, monotonicity holds when the weight function is constant for the interval $0<p<1$, which is generally achieved for moderately large $b$.
\end{proof}

Although the monotonicity guarantee presented in Proposition \ref{prop1} is important, in general, we do not need to restrict the quantile function to this limiting case to get rid of the crossing. In practice, a small bandwidth $b$ already adds enough smoothness to prevent crossing and is preferable in order to keep the weights concentrated on the target model. In fact, from Equations \eqref{kernel} and \eqref{weights}, it is easy to see that, if the bandwidth $b$ goes to zero, the weights are nonzero only for the target model ($p=\tau$) and $\widehat{Q}_a(\tau|x)=\widehat{Q}_s(\tau|x)$, which is the standard posterior mean. Therefore, the bandwidth, which is the only unknown parameter of Gaussian process model \eqref{gp_model}, will be estimated as the minimum value of $b$ that ensures noncrossing everywhere (for all $\tau$ and $\mathbf{X}$). Consequently, troublesome noncrossing constraints are handled easily here by means of a single smoothing parameter. Moreover, posterior consistency holds if the initial model is consistent, this is stated in Proposition \ref{prop3}.

\begin{proposition}
\label{prop3}
Let $\widehat{Q}_s(\tau|x)$ be the standard $\tau$-th quantile estimate \eqref{post_mean} and $\widehat{Q}_a(\tau|x)$ the final estimate \eqref{final_quant}. Then, if $\widehat{Q}_s(\tau|x)$ is consistent, $\widehat{Q}_a(\tau|x)$ is also consistent.
\end{proposition}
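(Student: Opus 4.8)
The plan is to leverage the fact that the bandwidth is not fixed but chosen adaptively so as to be as small as the noncrossing requirement allows. First I would rewrite the adjusted estimate as a linear combination of the per-model induced estimates. Because both the kernel \eqref{kernel} and the diagonal entries of $\Sigma$ depend on a posterior draw only through its model index $p$, the Gaussian process weights in \eqref{weights} are invariant under permuting draws within a model, hence constant across the $T$ draws of any given model; writing $Tw_p$ for that common weight and $\widehat Q^{(p)}(\tau|x)=\tfrac1T\sum_{t}Q^{(t)}(\tau|\mathbf X,p)$, we get $\widehat Q_a(\tau|x)=\sum_{p=1}^{P}(Tw_p)\,\widehat Q^{(p)}(\tau|x)$, and by \eqref{induced} with $p=\tau$ one has $\widehat Q^{(\tau)}(\tau|x)=\widehat Q_s(\tau|x)$. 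I would also record two facts already in the excerpt: as $b\to0$ the weights concentrate on the target model, so $Tw_\tau\to1$, $Tw_p\to0$ for $p\neq\tau$, and $\widehat Q_a(\tau|x)\to\widehat Q_s(\tau|x)$; and the estimated bandwidth $\widehat b$ is by construction the smallest $b$ for which $\tau\mapsto\widehat Q_a(\tau|x)$ is nondecreasing at every covariate value.

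The core step is to show $\widehat b\to0$. Reading the consistency hypothesis as consistency of the first-stage estimates at each of the levels $p_1<\dots<p_P$, uniformly over a compact covariate domain $\mathcal X$, and using that the true conditional quantile function is strictly increasing with $\inf_{x\in\mathcal X}\{Q(p_{j+1}|x)-Q(p_j|x)\}>0$, the raw curve $\tau\mapsto\widehat Q_s(\tau|x)$ is itself nondecreasing for all $x\in\mathcal X$ with probability tending to one. On that event $b=0$ already meets the noncrossing requirement and $\widehat b=0$; off it, the size of any residual crossing of $\widehat Q_s$ still vanishes, and since $\widehat Q_a$ is continuous in $b$ with $\widehat Q_a|_{b=0}=\widehat Q_s$, the minimal repairing bandwidth must tend to zero as well.

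Finally I would combine everything through $\widehat Q_a(\tau|x)-Q(\tau|x)=\big(\widehat Q_a(\tau|x)-\widehat Q_s(\tau|x)\big)+\big(\widehat Q_s(\tau|x)-Q(\tau|x)\big)$. The second term vanishes by assumption. For the first, the reduction above gives $\widehat Q_a-\widehat Q_s=(Tw_\tau(\widehat b)-1)\widehat Q_s(\tau|x)+\sum_{p\neq\tau}Tw_p(\widehat b)\,\widehat Q^{(p)}(\tau|x)$; each $\widehat Q^{(p)}(\tau|x)$ is $O_p(1)$ because the posterior means of $\mu^{(t)}$ and $\sigma^{(t)}$ under model $p$ converge to finite limits and \eqref{induced} is a fixed smooth transformation of them, while $b\mapsto w_p(b)$ is continuous with $Tw_\tau(0)=1$ and $w_p(0)=0$ for $p\neq\tau$. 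Since $\widehat b\to0$, the first term tends to zero, so $\widehat Q_a(\tau|x)\to Q(\tau|x)$, which is the claimed consistency.

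The hard part will be making $\widehat b\to0$ rigorous. The noncrossing constraint is enforced simultaneously over all covariate values, so pointwise consistency of $\widehat Q_s(\tau|x)$ is not quite enough on its own; one needs uniform control over a compact covariate set together with strict monotonicity of the limiting quantile process bounded away from zero. One must also rule out that arbitrarily small but persistent crossings of the raw estimates force a bandwidth bounded away from zero, which is precisely where the continuity of $b\mapsto\widehat Q_a$ and the smoothing action of the squared-exponential kernel \eqref{kernel} have to be used with care; a clean way to finesse this is to strengthen the hypothesis so that, with probability tending to one, the raw curve is already monotone and hence $\widehat b$ is exactly zero for all large $n$.
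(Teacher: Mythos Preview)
Your proposal is correct and follows the same logic as the paper's proof: consistency of the standard estimate implies the raw curves eventually do not cross, so the adaptively chosen bandwidth tends to zero and the adjusted estimate reduces to the standard one. The paper's own proof is a two-sentence sketch of exactly this argument, whereas you supply the weighted-average reduction, the decomposition $\widehat Q_a-Q=(\widehat Q_a-\widehat Q_s)+(\widehat Q_s-Q)$, the $O_p(1)$ control on the induced estimates, and the regularity conditions (uniform consistency over a compact covariate set, strict monotonicity of the true quantile function bounded away from zero) that the paper leaves implicit.
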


\begin{proof}
If $\widehat{Q}_s(\tau|x)$ is consistent, it does not cross as sample size goes to infinity. Therefore, asymptotically, the estimated bandwidth $b$ goes to zero and the final estimate reduces to the initial one.
\end{proof}

Estimation of regression model \eqref{gp_model} can be further simplified as the weights given in \eqref{weights} do not depend on iteration number $t$. Therefore, instead of using all MCMC posterior samples $Q^{(t)}(\tau|\mathbf{X},x)$, the same final estimates can be obtained by fitting a Gaussian process only to the induced posterior means, avoiding Gaussian process issues with large data sets. The result is presented in Proposition \ref{prop2} (see Appendix for proof).

\begin{proposition}
\label{prop2}
Let $\widehat{Q}_s(\tau|\mathbf{X},p)$ be the posterior mean of induced $\tau$-th quantile from auxiliary model ALD(p),
\begin{equation} 
\widehat{Q}_s(\tau|\mathbf{X},p) = \frac{1}{T}\sum_{t=1}^T{Q^{(t)}(\tau|\mathbf{X},p)}.
\label{qmean}
\end{equation}
Then adjusted posterior mean \eqref{final_quant} and variance \eqref{final_var} can be obtained by fitting a Gaussian process only to the induced posterior means $\widehat{Q}_s(\tau|\mathbf{X},p)$, i.e.
\begin{equation} 
\label{gp_model2}
\begin{aligned}
\widehat{Q}_s(\tau|\mathbf{X}&,p) = g'(p) + \epsilon' \\
\epsilon' &\sim \mathcal{N}(0,\Sigma') \\
g'(p) &\sim \mathcal{GP}(0,K') ,
\end{aligned}
\end{equation}
where $\Sigma'$ is a $(P,P)$ diagonal covariance matrix whose diagonal entries are $\sigma^2(\tau|\mathbf{X},p)/T$, since independence is assumed, and $K'$ is a $(P,P)$ covariance matrix whose
entries are calculated as in \eqref{kernel}. More specially, if we let $g'(p_*{=}\tau) \sim \mathcal{N} (\mu', {\sigma'}^2)$ be the predictive posterior distribution for the mean function of model \eqref{gp_model2}, then the equivalence is as follow
\begin{equation}
\begin{aligned}
\label{final_par}
\mu_* &=\mu' = \sum_{p=1}^P w'_{p} \widehat{Q}_s(\tau|\mathbf{X},p), \\
\sigma_*^2 &= {\sigma'}^2 + \sigma^2(\tau|\mathbf{X},p=\tau),  \\
\end{aligned}
\end{equation}
where ${\sigma'}^2= k'(\tau,\tau) - W' K'(.,\tau)$ and $W'= K'(.,\tau)^{\top} (K' + \Sigma')^{-1}$.
\end{proposition}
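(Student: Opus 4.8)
The plan is to recognise the statement as a sufficiency reduction and then to verify that the Gaussian-process formulas \eqref{final_quant}--\eqref{weights} realise exactly that reduction. Conditional on the latent function $g$, the $T$ samples $Q^{(1)}(\tau|\mathbf{X},p),\dots,Q^{(T)}(\tau|\mathbf{X},p)$ in block $p$ are i.i.d.\ $\mathcal{N}(g(p),\sigma^2(\tau|\mathbf{X},p))$, so their average $\widehat{Q}_s(\tau|\mathbf{X},p)$ in \eqref{qmean} is sufficient for $g(p)$ with $\widehat{Q}_s(\tau|\mathbf{X},p)\mid g\sim\mathcal{N}(g(p),\sigma^2(\tau|\mathbf{X},p)/T)$. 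Hence the Gaussian posterior over $g$, and the predictive law of $g(\tau)$, should be unchanged when the $T$ replicates in block $p$ are replaced by their mean with the noise variance scaled by $1/T$; this already pins down $\mu_*=\mu'$ and the variance identity, the extra $\sigma^2(\tau|\mathbf{X},p=\tau)$ in $\sigma_*^2$ being the observation noise attached to the test point in \eqref{gp_model} but absent from the noise-free latent predictive \eqref{gp_model2}. The remaining work is to confirm that this matches the closed forms.

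To carry it out, I would order the $P\times T$ observations by $p$ and then by $t$ and write the covariances of \eqref{gp_model} in Kronecker form: with $J_T=\mathbf{1}_T\mathbf{1}_T^{\top}$ the $T\times T$ all-ones matrix, $D=\mathrm{diag}(\sigma^2(\tau|\mathbf{X},p_1),\dots,\sigma^2(\tau|\mathbf{X},p_P))$, $\tilde K$ the $P\times P$ matrix with entries $k(p,p')$ from \eqref{kernel} and $\tilde k_\tau=(k(p_1,\tau),\dots,k(p_P,\tau))^{\top}$, one has $K=\tilde K\otimes J_T$, $\Sigma=D\otimes I_T$ and $K(\cdot,\tau)=\tilde k_\tau\otimes\mathbf{1}_T$, while in the reduced model \eqref{gp_model2} $K'=\tilde K$, $\Sigma'=D/T$ and $K'(\cdot,\tau)=\tilde k_\tau$.

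The key step is that $K+\Sigma$ preserves the subspace $\{w\otimes\mathbf{1}_T:w\in\mathbb{R}^P\}$ and on it acts as $w\mapsto(T\tilde K+D)w$ (because $J_T\mathbf{1}_T=T\mathbf{1}_T$), so $(K+\Sigma)^{-1}$ restricted there is $w\mapsto(T\tilde K+D)^{-1}w$. Since $K(\cdot,\tau)=\tilde k_\tau\otimes\mathbf{1}_T$ lies in this subspace, the functional $W=K(\cdot,\tau)^{\top}(K+\Sigma)^{-1}$ annihilates its orthogonal complement, so $\mu_*=W\,\mathbf{Q}^{(t)}$ sees the data only through its projection onto $\{w\otimes\mathbf{1}_T\}$, which is $\widehat{\mathbf{Q}}_s\otimes\mathbf{1}_T$ with $\widehat{\mathbf{Q}}_s=(\widehat{Q}_s(\tau|\mathbf{X},p_1),\dots,\widehat{Q}_s(\tau|\mathbf{X},p_P))^{\top}$. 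Pushing the Kronecker products through, using $\mathbf{1}_T^{\top}\mathbf{1}_T=T$ and the identity $(\tilde K+D/T)^{-1}=T(T\tilde K+D)^{-1}$, I would obtain $\mu_*=\tilde k_\tau^{\top}(\tilde K+D/T)^{-1}\widehat{\mathbf{Q}}_s=\mu'$ and, by the same computation with $\widehat{\mathbf{Q}}_s$ replaced by $\tilde k_\tau$, $W K(\cdot,\tau)=\tilde k_\tau^{\top}(\tilde K+D/T)^{-1}\tilde k_\tau=W'K'(\cdot,\tau)$; since $k(\tau,\tau)=k'(\tau,\tau)$ by \eqref{kernel}, this gives $k(\tau,\tau)-W K(\cdot,\tau)={\sigma'}^2$ and hence $\sigma_*^2={\sigma'}^2+\sigma^2(\tau|\mathbf{X},p=\tau)$, which is \eqref{final_par}.

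The main obstacle will be purely bookkeeping: keeping the factors of $T$ consistent ($J_T\mathbf{1}_T=T\mathbf{1}_T$ and $\mathbf{1}_T^{\top}\mathbf{1}_T=T$; equivalently, whether $\mathbf{1}_T$ is normalised to a unit vector when projecting onto the rank-one eigenspace of $J_T$), so that the data projection comes out as exactly $\widehat{Q}_s(\tau|\mathbf{X},p)$ and not a rescaled version. If one prefers to bypass the eigenstructure of $J_T$, the same conclusion follows by applying the Woodbury identity directly to $K+\Sigma=(\tilde K\otimes J_T)+(D\otimes I_T)$, but exploiting that $J_T$ has rank one keeps the algebra shortest.
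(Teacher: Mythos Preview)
Your proposal is correct. The first paragraph — recognising that the block means $\widehat{Q}_s(\tau|\mathbf{X},p)$ are sufficient for $g(p)$ under the conditional i.i.d.\ Gaussian noise model, so that the posterior over $g$ and hence the predictive law of $g(\tau)$ are unchanged by the reduction — is exactly the argument the paper gives: it writes out the likelihood, factors it as
\[
\prod_p \exp\!\Bigl\{-\tfrac{1}{2\sigma_p^2}\sum_t(q_{pt}-\bar q_p)^2\Bigr\}\;\prod_p \exp\!\Bigl\{-\tfrac{T}{2\sigma_p^2}(g(p_p)-\bar q_p)^2\Bigr\},
\]
invokes Fisher--Neyman, and concludes that the two models share the same posterior (and predictive) for $g$; the extra $\sigma^2(\tau|\mathbf{X},p{=}\tau)$ is then picked up exactly as you say, from $Q_*=g_*+\epsilon$. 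The paper stops there.

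Where you differ is that you do not rely on the sufficiency argument alone: you give a second, direct verification by writing $K=\tilde K\otimes J_T$, $\Sigma=D\otimes I_T$, $K(\cdot,\tau)=\tilde k_\tau\otimes\mathbf 1_T$, observing that $K+\Sigma$ is block-diagonal with respect to the orthogonal decomposition $\mathbb{R}^{PT}=\{w\otimes\mathbf 1_T\}\oplus\{w\otimes\mathbf 1_T\}^\perp$, and pushing the Kronecker products through to obtain $\mu_*=\tilde k_\tau^{\top}(\tilde K+D/T)^{-1}\widehat{\mathbf Q}_s=\mu'$ and likewise for the variance. This is a genuinely different route: it is purely linear-algebraic, bypasses any appeal to sufficient statistics, and confirms the explicit closed forms \eqref{final_quant}--\eqref{weights} rather than inferring their equality abstractly. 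The paper's proof is shorter and more conceptual; yours is self-contained at the level of the GP predictive formulas and would still go through for a reader unfamiliar with the factorisation theorem. Your bookkeeping on the factors of $T$ (via $J_T\mathbf 1_T=T\mathbf 1_T$, $\mathbf 1_T^{\top}\mathbf 1_T=T$, and $(T\tilde K+D)^{-1}=T^{-1}(\tilde K+D/T)^{-1}$) is correct.
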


Hence, the final quantile estimate is a weighted average of the induced quantile posterior means \eqref{final_par}. Furthermore, the posterior variance \eqref{final_par} suffers no substantial change, as the additional term ${\sigma'}^2$ is the predictive posterior mean variance, being in the order of magnitude of the Monte Carlo variance (${\sigma'}^2\approx\sigma^2/T$), which decreases to zero at the rate 1/T.

We illustrate the estimation process in Figure~\ref{idea_ald}. We consider estimating the conditional deciles for a given data set. After fitting standard Bayesian quantile regression in the first stage, if we look at a given $\mathbf{X}$, we have a set of estimated conditional quantile functions, one for each auxiliary model. Focusing on the $0.4$-th conditional quantile, Figure~\ref{idea_ald_a} shows the $0.4$-th induced quantile posterior means $\widehat{Q}_s(0.4|\mathbf{X},p)$. Borrowing strength is then achieved by fitting a Gaussian process regression with these induced posterior quantiles (Figure~\ref{idea_ald_b}), where the bandwidth is chosen as the minimum value such that noncrossing holds everywhere (for all $\tau$ and $\mathbf{X}$). A general algorithm can be summarized as follows: 

\begin{enumerate}
 \item Fit $P$ separate ALD(p), $p=p_1, ... , p_P$, (Equation \ref{ald}); 
 \item Calculate induced quantile posterior means $\widehat{Q}_s(\tau|\mathbf{X},p)$ for all $\mathbf{X}$  and $\tau=p_1,\ldots, p_P$ (\ref{qmean});
 \item Initialize $b \approx 0$ and while quantile estimates cross, increase $b$ and calculate regression adjusted quantile estimates (\ref{final_par}) for every $\mathbf{X}$ and  $\tau=p_1,\ldots, p_P$.
\end{enumerate}

\begin{figure}[!ht]
    \centering
    \subfloat[\footnotesize{First stage: ALD auxiliary models} \label{idea_ald_a}]{\includegraphics[width=7.5cm,height=7.5cm,angle=-90]{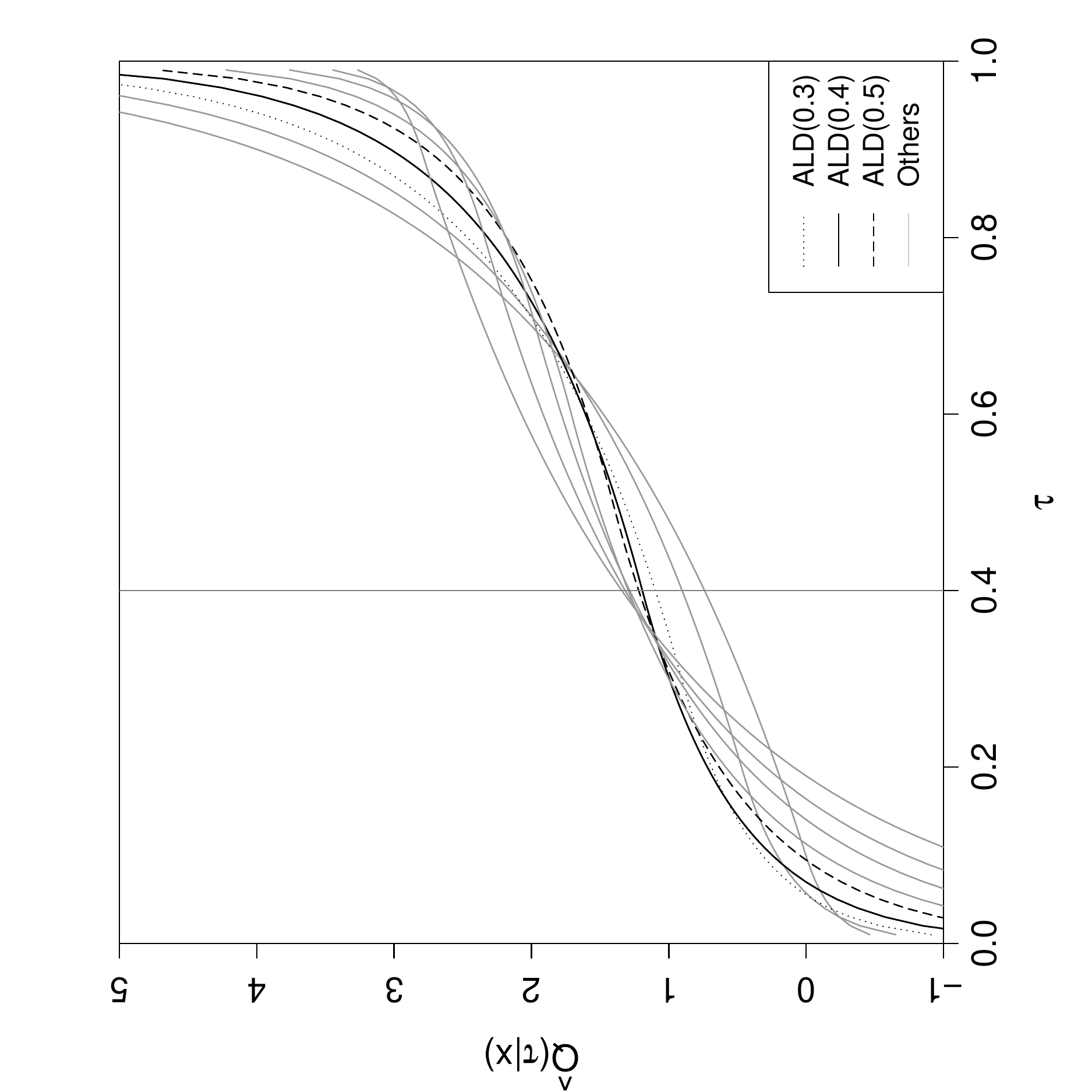}}
    \subfloat[\footnotesize{Second stage: Gaussian process regression} \label{idea_ald_b}]{\includegraphics[width=7.5cm,height=7.5cm,angle=-90]{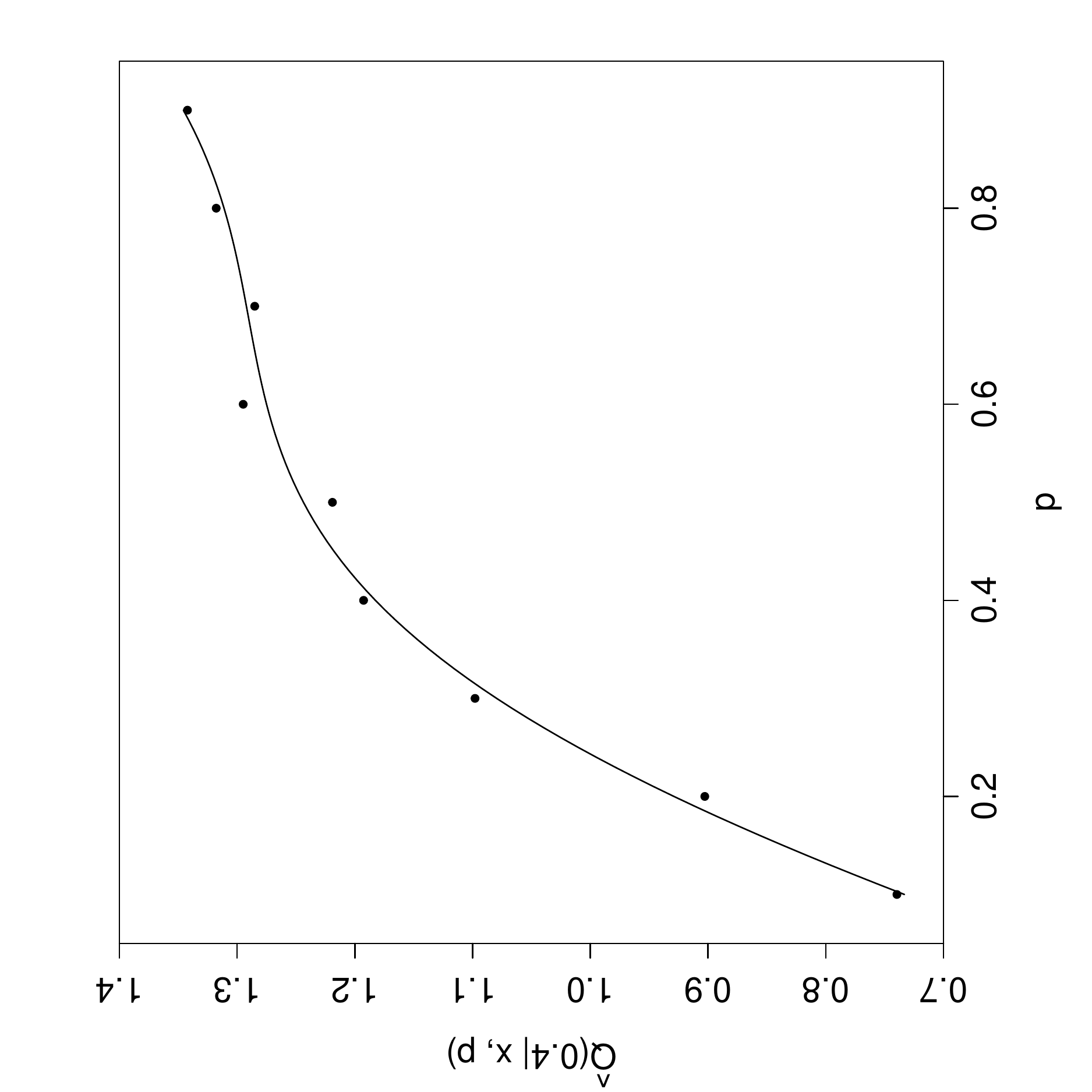}}
    \caption{Two-stage approach illustration:   $0.4$-th quantile estimation for a given value of $\mathbf{X}$.}
    \label{idea_ald}
\end{figure}

The performance of the regression adjusted quantile estimator will be studied on simulated and real data examples in the next sections.

\section{Simulation study}

\subsection{Linear quantile regression}
The proposed two-stage quantile model is compared to two noncrossing linear quantile regression methods, the constrained minimization approach of \cite{Bondell2010} and the semiparametric Bayesian model of \cite{Reich2013}. Codes for the first are available from the author's web page, whereas the second is implemented in BSquare package \citep{BSquare} in R \citep{Rmanual}. Our first stage estimation is carried out using the bayesQR package \citep{Rbayesqr} from R. Default uninformative priors were used throughout.
We considered four simulation designs studied by \cite{Reich2013}:

\begin{description}
 \item [Design 1.] $\beta_0(\tau)=\log[\tau/(1-\tau)]$, $\beta_1(\tau)=2$;
 \item [Design 2.] $\beta_0(\tau)=\text{sign}(0.5-\tau)\log{(1-2 \left|0.5-\tau \right|)}$, $\beta_1(\tau)=2\tau$;
 \item [Design 3.] $\beta_0(\tau)=\Phi^{-1}(\tau)$, $\beta_1(\tau)=2 \min{\{\tau-0.5,0\}}$;
 \item [Design 4.] $\beta_0(\tau)=2\Phi^{-1}(\tau)$, $\beta_1(\tau)=2 \min{\{\tau-0.5,0\}}$, $\beta_2(\tau)=2\tau$, $\beta_3(\tau)=2$, $\beta_4(\tau)=1$, $\beta_5(\tau)=0$;
\end{description}

For each design, we generate $U_i \overset{iid}{\sim} \text{Unif}(0,1), i=1,\dots,100$, and generate the $j$th covariate $X_{ij} \overset{iid}{\sim} \text{Unif}(-1,1)$. $Y_i = \beta_0(U_i)+ \sum_{j} X_{ij}\beta_j(U_i)$. Quantiles $\tau=0.05,0.06,\dots,0.95$ were fitted to the data. For the \cite{Reich2013} method, logistic base distribution was used with $4$ basis functions. For our first stage estimation, we fitted standard Bayesian quantile regression with $31500$ MCMC draws, thinning $m=30$ and burn-in of $1500$.

To compare the methods, $500$ data sets were simulated and the empirical root mean integrated squared error, $\text{RMISE}=\sqrt{1/n\sum_{i=1}^n{\{Q(\tau|X_i) - \widehat{Q}(\tau|X_i)\}^2}}$, was computed for each data set and $\tau$. Crossing occurred in $405$, $484$, $480$ and $500$ out of 500 data sets in the four designs. Figure \ref{sim_linear} presents the average RMISE for all designs and investigated methods. 

\begin{figure}[!ht]
    \centering
    \subfloat[\footnotesize{Design 1} \label{des1}]{\includegraphics[width=7.5cm,height=7.5cm,angle=-90]{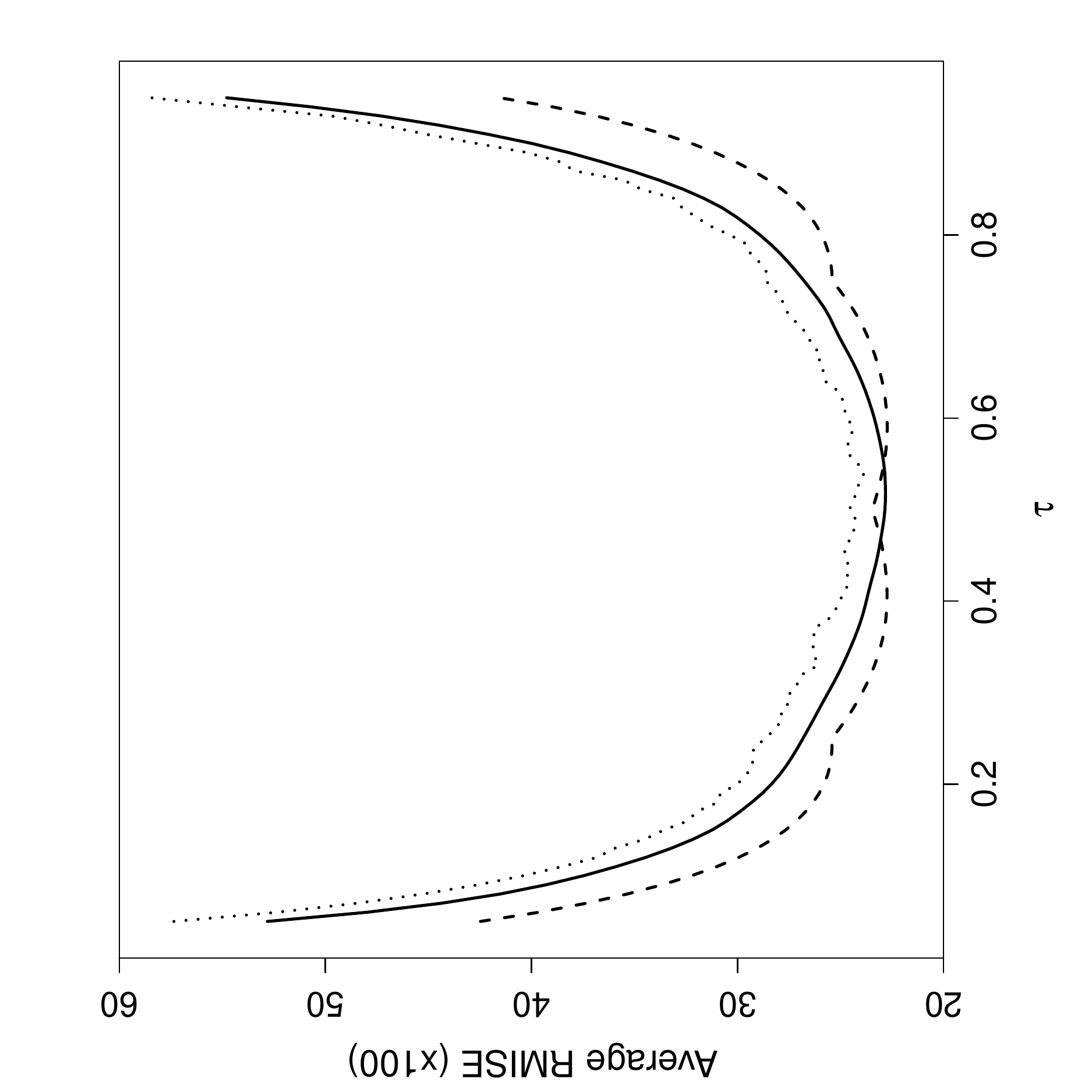}}
    \subfloat[\footnotesize{Design 2} \label{des2}]{\includegraphics[width=7.5cm,height=7.5cm,angle=-90]{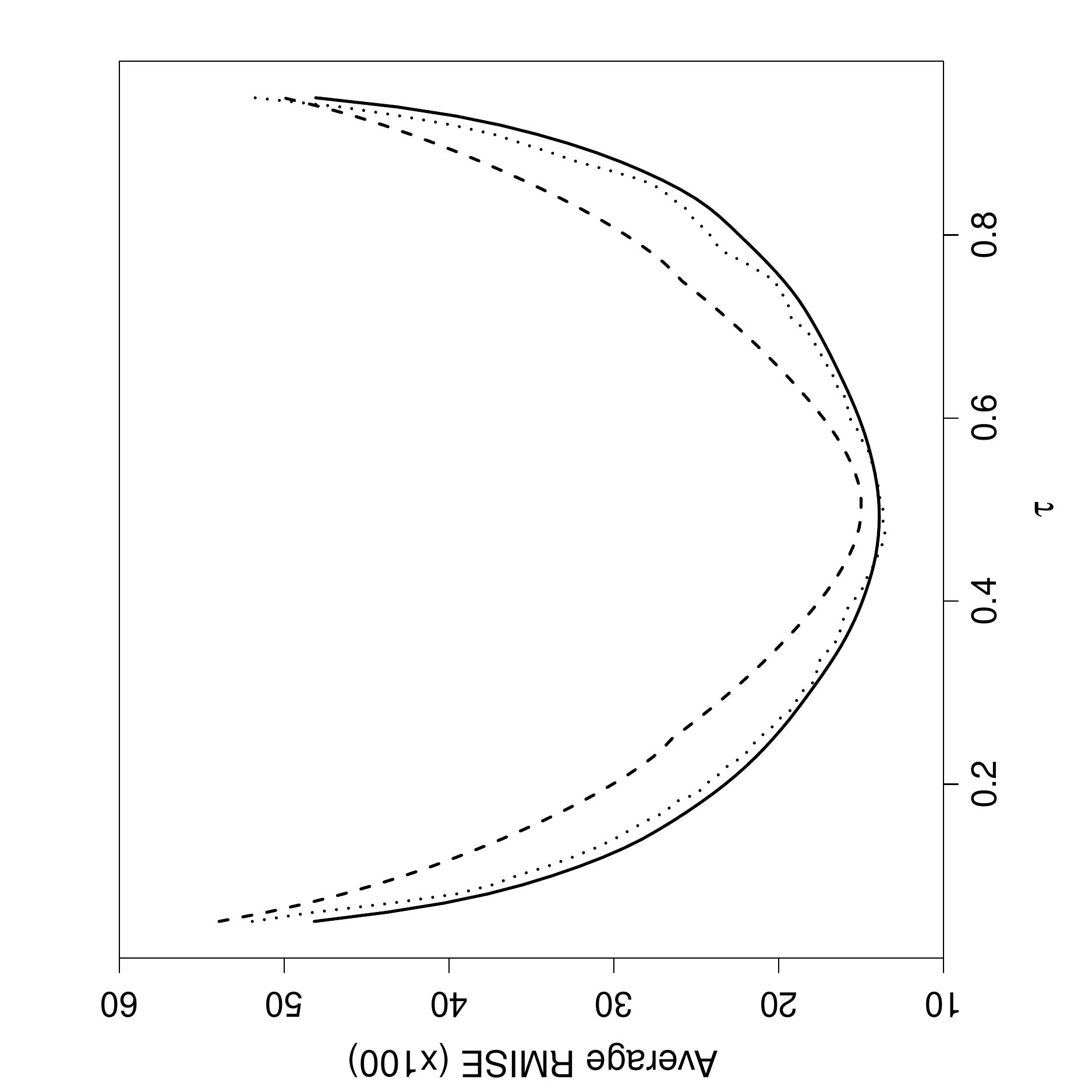}} \\
    \subfloat[\footnotesize{Design 3} \label{des3}]{\includegraphics[width=7.5cm,height=7.5cm,angle=-90]{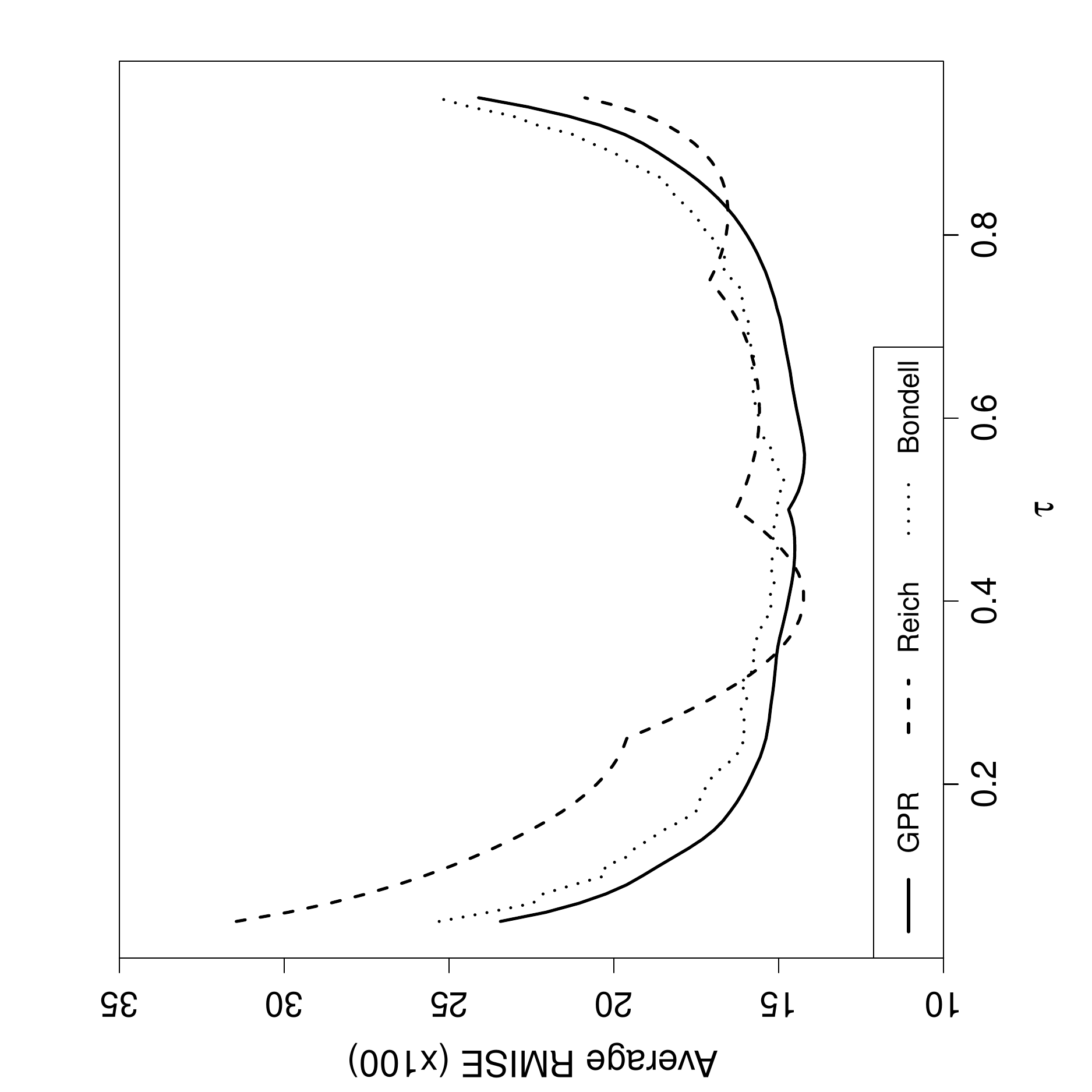}}
    \subfloat[\footnotesize{Design 4} \label{des4}]{\includegraphics[width=7.5cm,height=7.5cm,angle=-90]{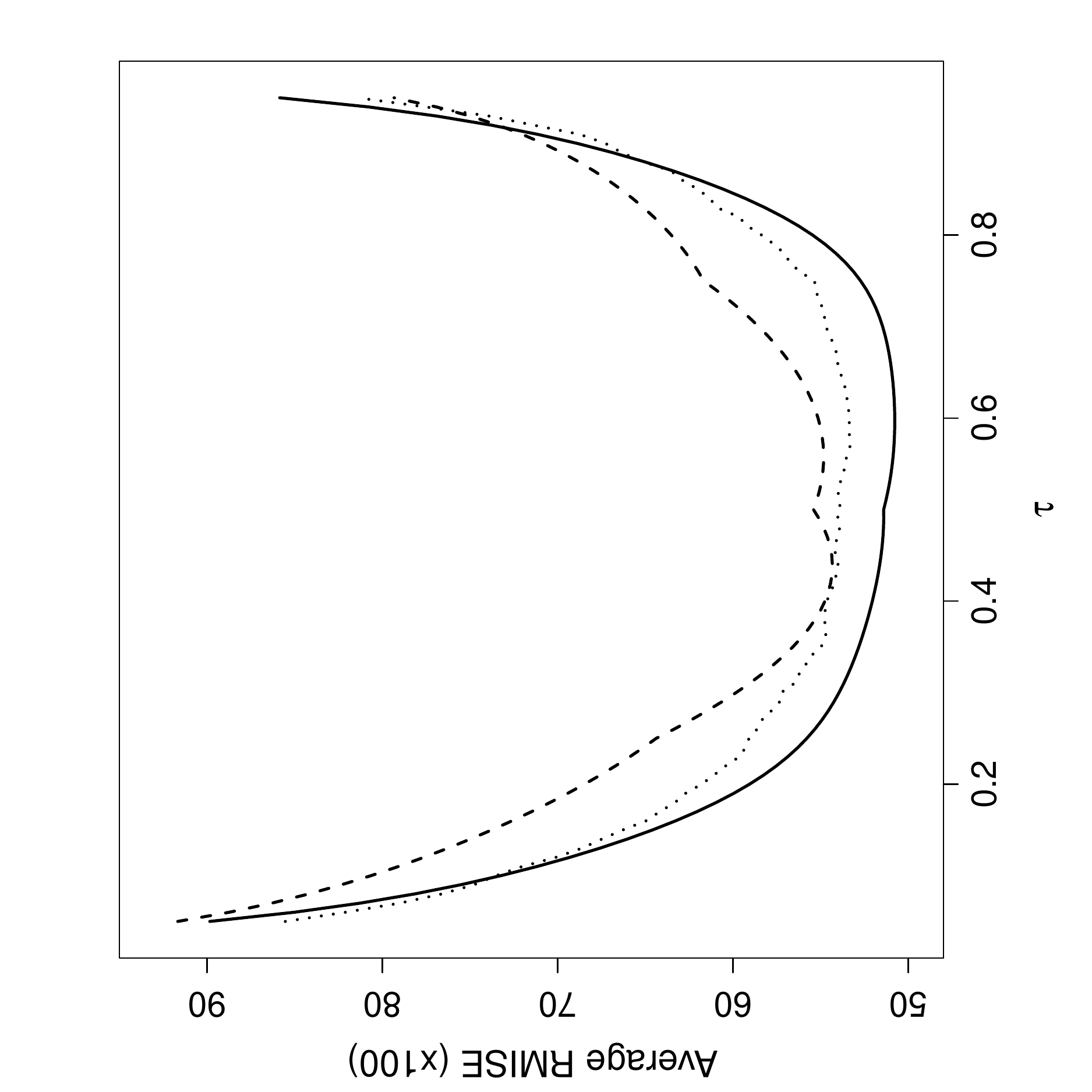}}
    \caption{Average RMISE ($\times 100$) over 500 data sets at $\tau=0.05,0.06,\dots,0.95$.}
    \label{sim_linear}
\end{figure}

The first design is very peculiar because $\beta_1$ does not vary with $\tau$. For this setting, the method of \cite{Reich2013} is significantly better than the others for the extreme quantiles, whereas at the interquartile range this difference is not so significant. Standard error plots of average RMISE were omitted, but values for all methods range from 0.6 (around the interquartile range) to 1.5 at the tails. This performance is expected as Design 1 satisfies the assumptions of the  \cite{Reich2013} model and, naturally, the correct model specification adds valuable information, especially where data availability is scarce. On the other hand, for Designs 2 and 3, where quantile regression is more appealing given that parameters are varying with $\tau$, the regression adjustment and \cite{Bondell2010} methods have similar performance, both outperforming  \cite{Reich2013} almost everywhere. For the multivariate scenario brought by Design 4, the proposed method has smallest RMISE than the others for most quantile levels and this difference is significant around the interquartile range.
Therefore, except for the simple scenario of Design 1, the proposed Gaussian process regression adjustment of standard Bayesian quantile estimates performs similar to or better than the simultaneous fitting methods of the other two approaches.

An initial drawback of the proposed approach is that the Gaussian process regression adjustment may jeopardize the linear relationship between the dependent and independent variables. Note that the final estimate is a weighted average of the induced quantiles \eqref{final_par}, whose weights depend on the covariance matrix $\Sigma'$ and, consequently, on the covariates $\mathbf{X}$. This compromises the linearity, particularly if the bandwidth is large. Nevertheless, using an approximation to a constant covariance matrix $\Sigma'$ with respect to $\mathbf{X}$ proves to perform well. 

For a given $\tau$ and $\mathbf{X}$, the elements of the diagonal covariance matrix $\Sigma'$ are the variances of the induced quantile posterior means $\widehat{Q}_s(\tau|\mathbf{X},p)$ for different auxiliary models ALD($p$). Figure \ref{cov_sigma} shows how these elements vary with $p$ for a sample from Design 3. The different grey curves are the covariances $\Sigma'$ for each observed $\mathbf{X}$. Notice that the grey curves share a similar shape across $p$ for different values of $\mathbf{X}$. Therefore, the covariance matrices $\Sigma'$ are approximately proportional to the mean covariance across $\mathbf{X}$ (black curve), denoted by $\bar{\Sigma}'$. Thus to maintain linearity, we may use $\bar{\Sigma}'$ to approximate the covariance matrix.

\begin{figure}[!ht]
    \centering
    \subfloat[\footnotesize{Quantile level $\tau=0.5$} \label{sigma50}]{\includegraphics[width=7.5cm,height=7.5cm,angle=-90]{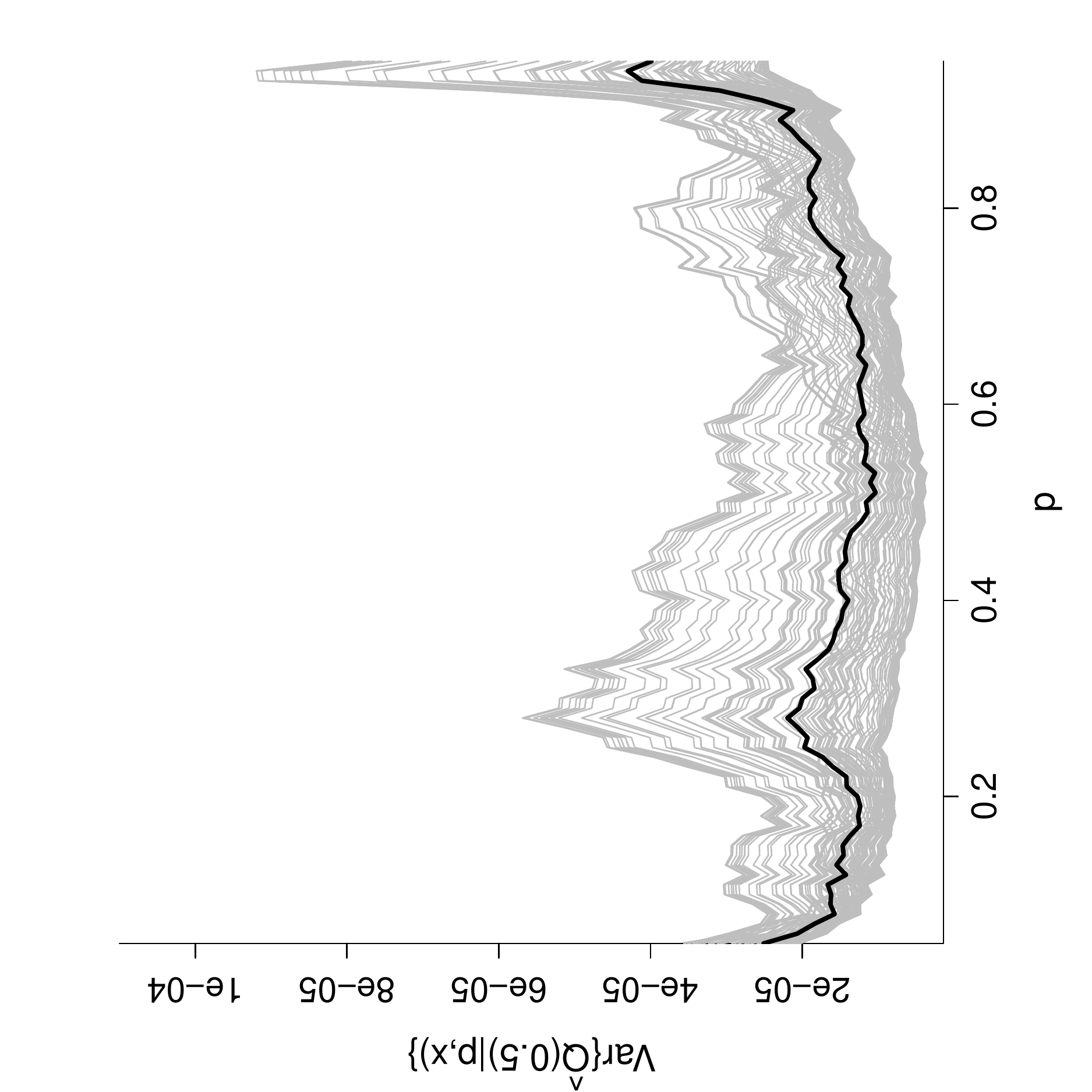}}
    \subfloat[\footnotesize{Quantile level $\tau=0.95$} \label{sigma95}]{\includegraphics[width=7.5cm,height=7.5cm,angle=-90]{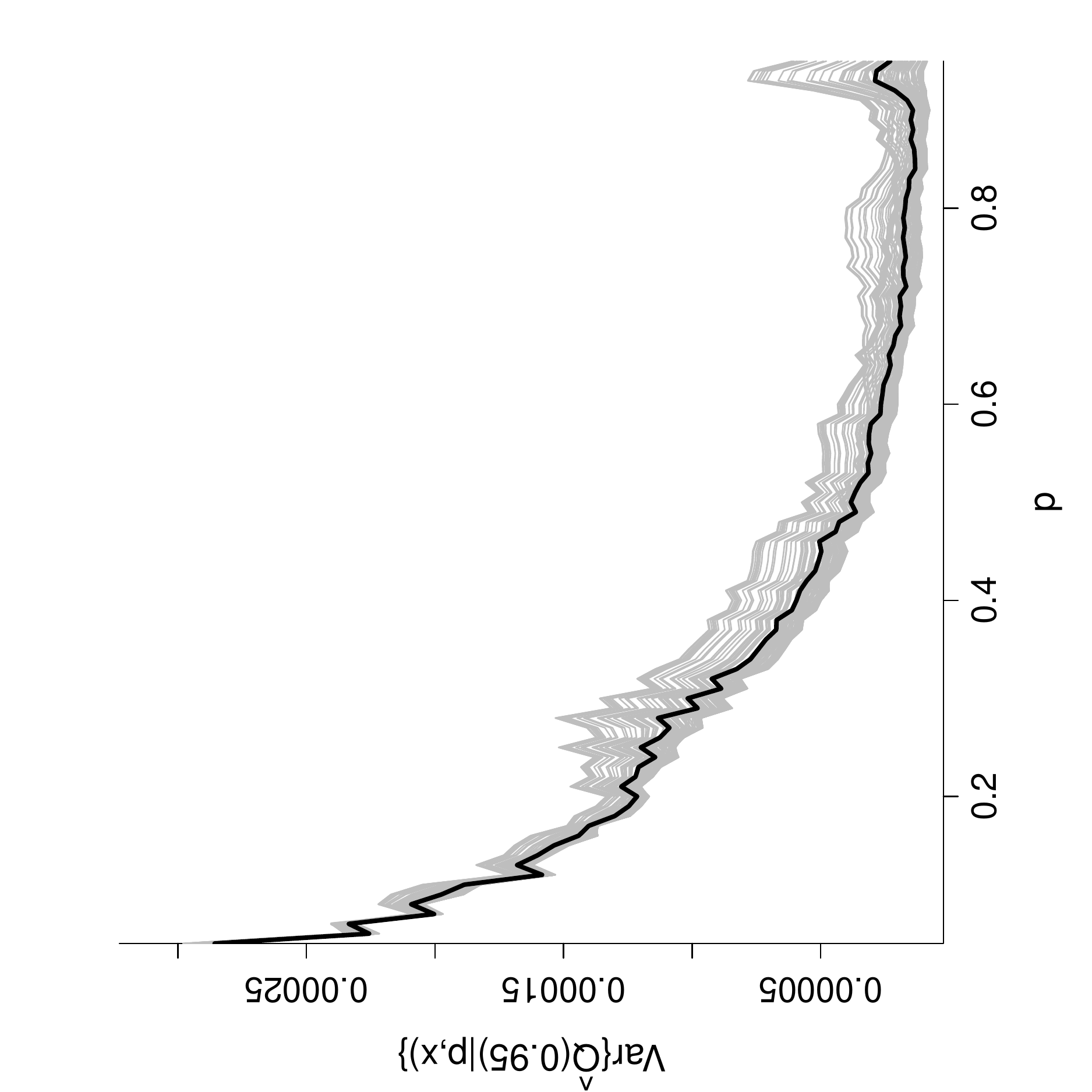}}
    \caption{Variance of induced quantile posterior means $\widehat{Q}_s(\tau|\mathbf{X},p)$ for all values of $\mathbf{X}$ (grey lines) and the mean variance across $\mathbf{X}$ (black line) for a sample from Design 3.}
    \label{cov_sigma}
\end{figure}

The average RMISE for the proposed linear Gaussian process regression adjustment (LGPR) coincides with the previous GPR results (Figure \ref{sim_linear}) for all simulated 
designs.  From Figure \ref{xy_betacte} we can see that the final estimates from LGPR are very similar to the ones obtained through GPR even when the bandwidth is very large. Therefore the approximation used for linear regression appears to be satisfactory, producing very similar results from the original method while holding the linearity needed for parameter interpretation purposes.

\begin{figure}[!ht]
    \centering
    \subfloat[\footnotesize{Median bandwidth ($b=0.1$)} \label{xy1}]{\includegraphics[width=7.5cm,height=7.5cm,angle=-90]{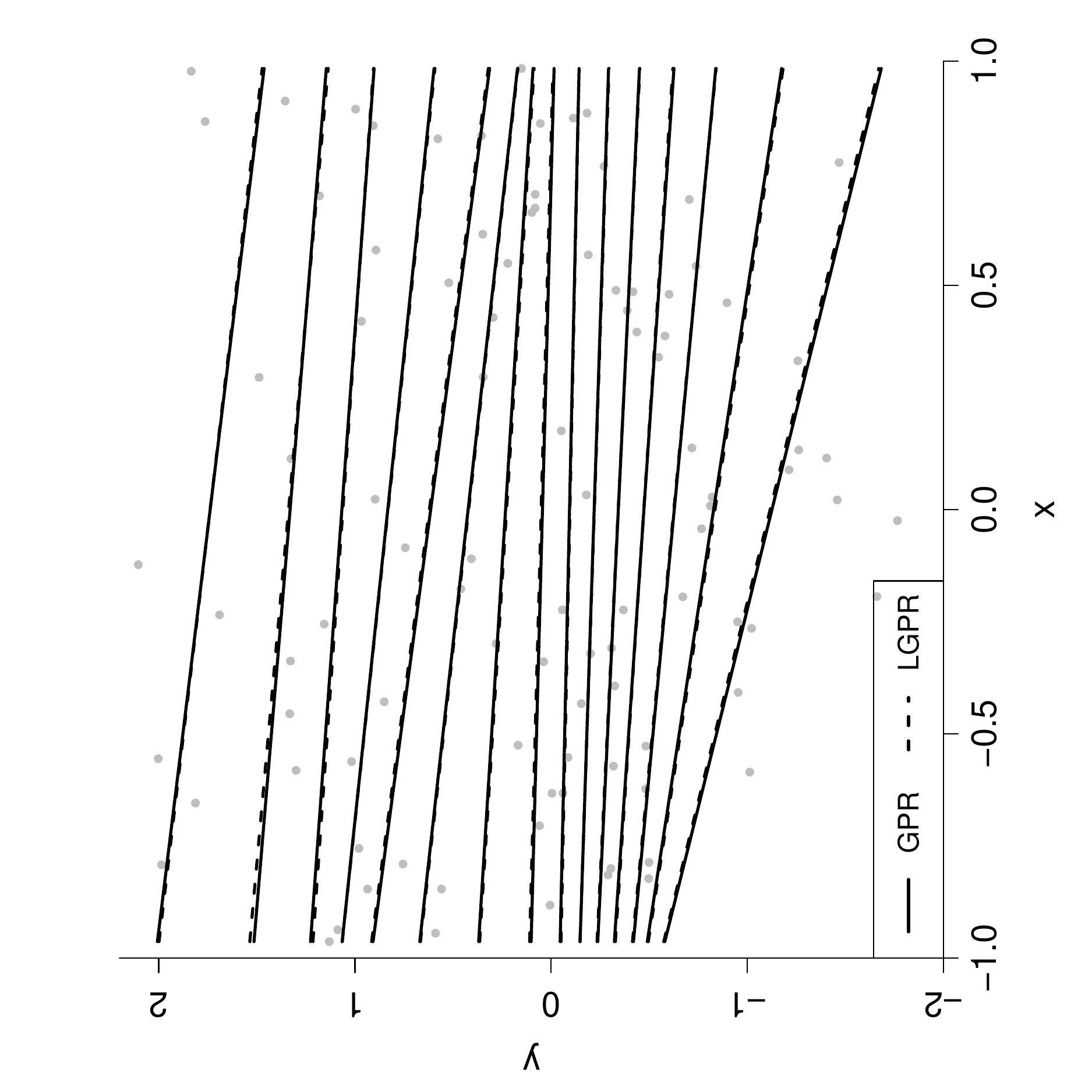}}
    \subfloat[\footnotesize{Largest bandwidth ($b=2000$)} \label{xy2}]{\includegraphics[width=7.5cm,height=7.5cm,angle=-90]{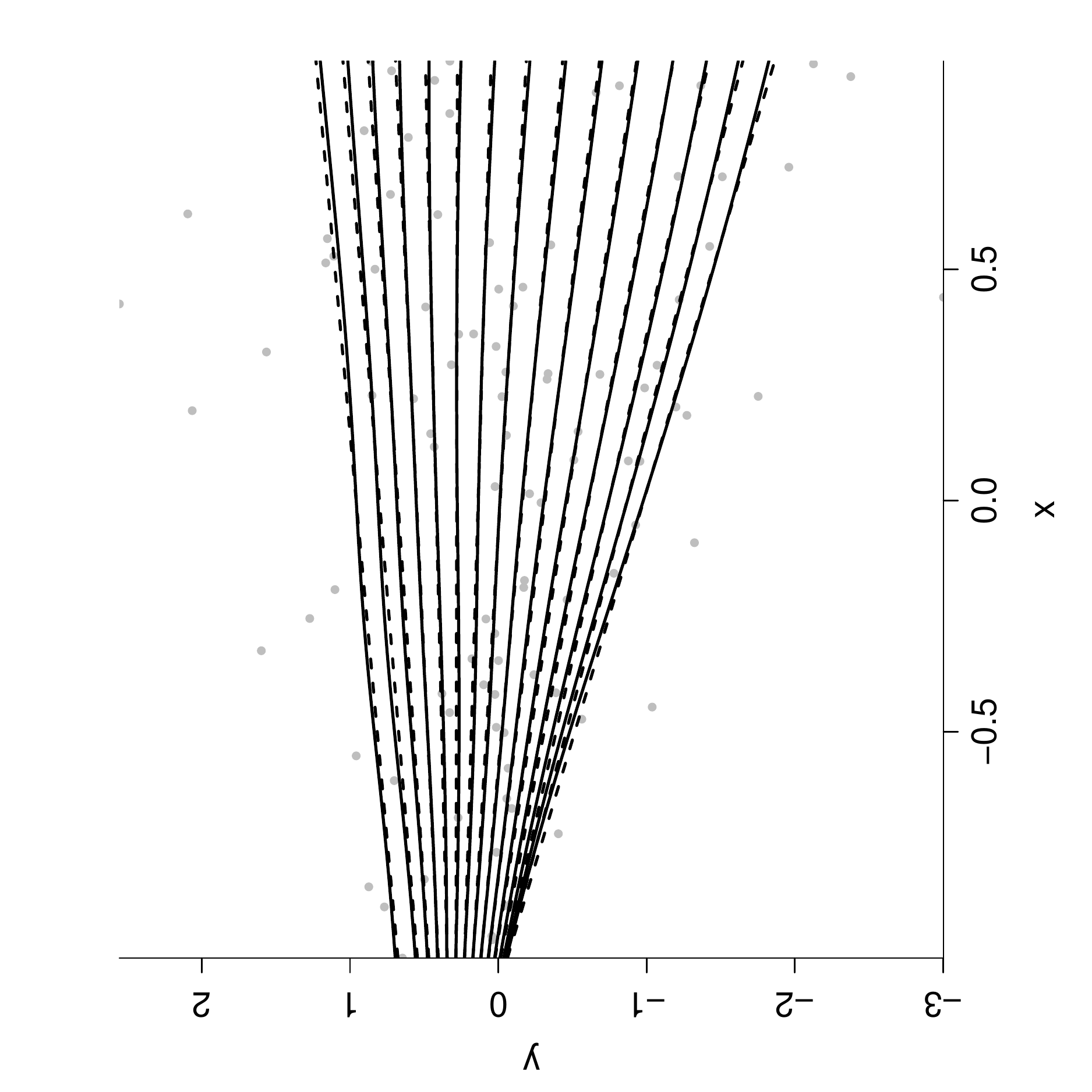}}
    \caption{Quantile regression estimates using Gaussian process regression adjustment (GPR) and linear Gaussian process regression adjustment (LGPR) for two different samples from Design 3 at quantile levels $\tau=0.10,0.16,\dots,0.94$.}
    \label{xy_betacte}
\end{figure}

\subsection{Nonparametric quantile regression}
For nonparametric quantile regression, the frequentist approach of \cite{Bondell2010} for fitting linear splines will be considered. 
However, as their codes are not available for the nonparametric case, we will analyse the same simulation designs proposed by them to have their results for comparison. Let $y_i = f(x_i) + g(x_i)\epsilon_i$ be a heteroscedastic error model, then Designs 5 and 6 are given by the following choices of mean and covariance functions:

\begin{description}
 \item [Design 5.] $f(x)=0.5+2x+\text{sin}(2\pi x-0.5)$, $g(x)=1$;
 \item [Design 6.] $f(x)=3x$, $g(x)=0.5+2x+\text{sin}(2\pi x-0.5)$;
\end{description}

Similarly, for each design, $i=1,\dots,100$ samples were generated, given that $X_{i} \overset{iid}{\sim} \text{Unif}(0,1)$ and $\epsilon_i \overset{iid}{\sim} \text{N}(0,1)$. Quantile levels $\tau=0.05,0.06,\dots,0.95$ were estimated. For the first stage regression adjustment, cubic splines with 25 equally spaced knots were fitted to the data using standard Bayesian quantile regression from bayesQR package (same MCMC configuration as previous simulation designs). Again, we use uninformative priors. Linear B-splines with knots at each data point were used by \cite{Bondell2010}. We simulated $500$ data sets and all samples presented crossing issues when fitted with standard Bayesian quantile regression. Results are presented in Figure \ref{sim_splines}.

\begin{figure}[!ht]
    \centering
    \subfloat[\footnotesize{Design 5} \label{des5}]{\includegraphics[width=7.5cm,height=7.5cm,angle=-90]{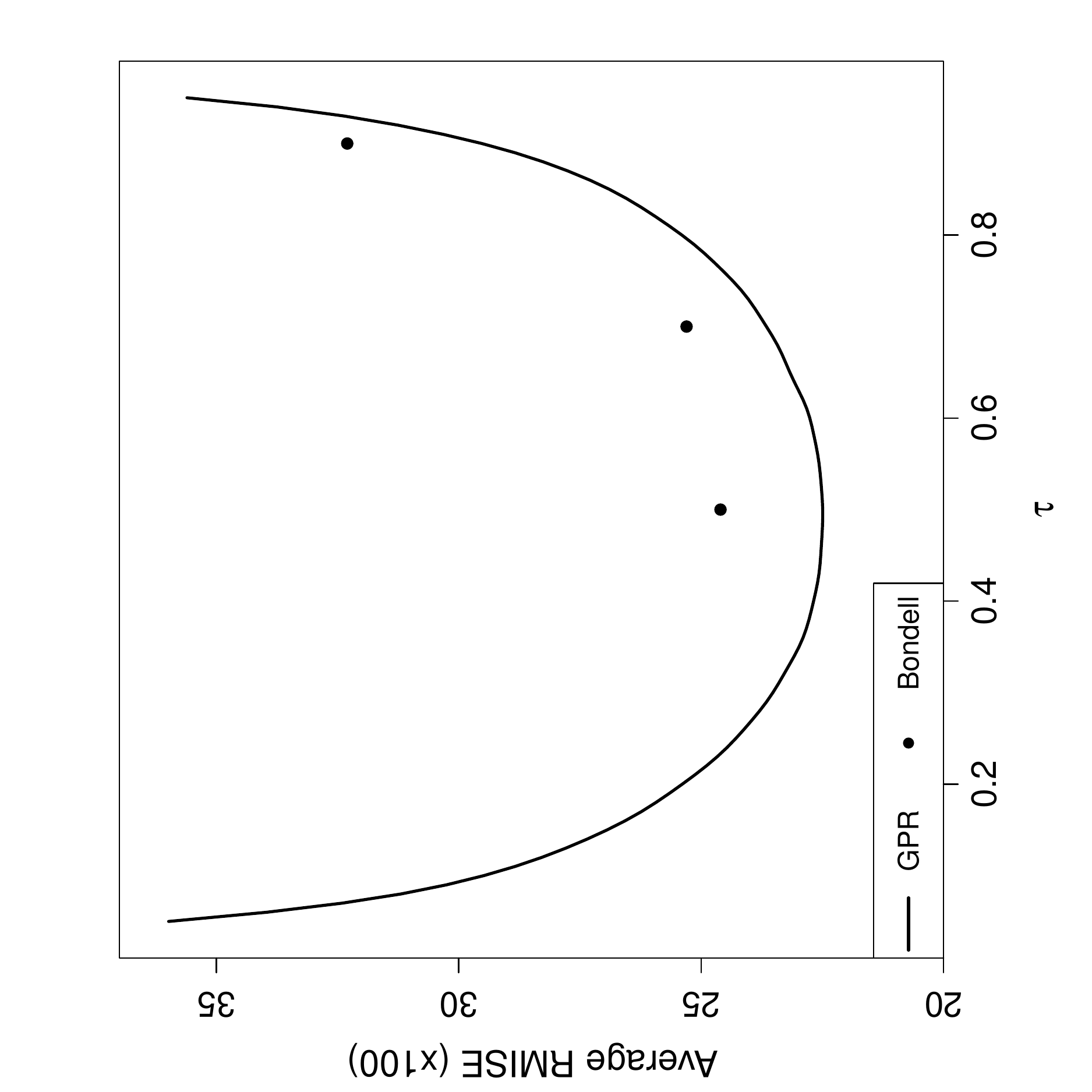}}
    \subfloat[\footnotesize{Design 6} \label{des6}]{\includegraphics[width=7.5cm,height=7.5cm,angle=-90]{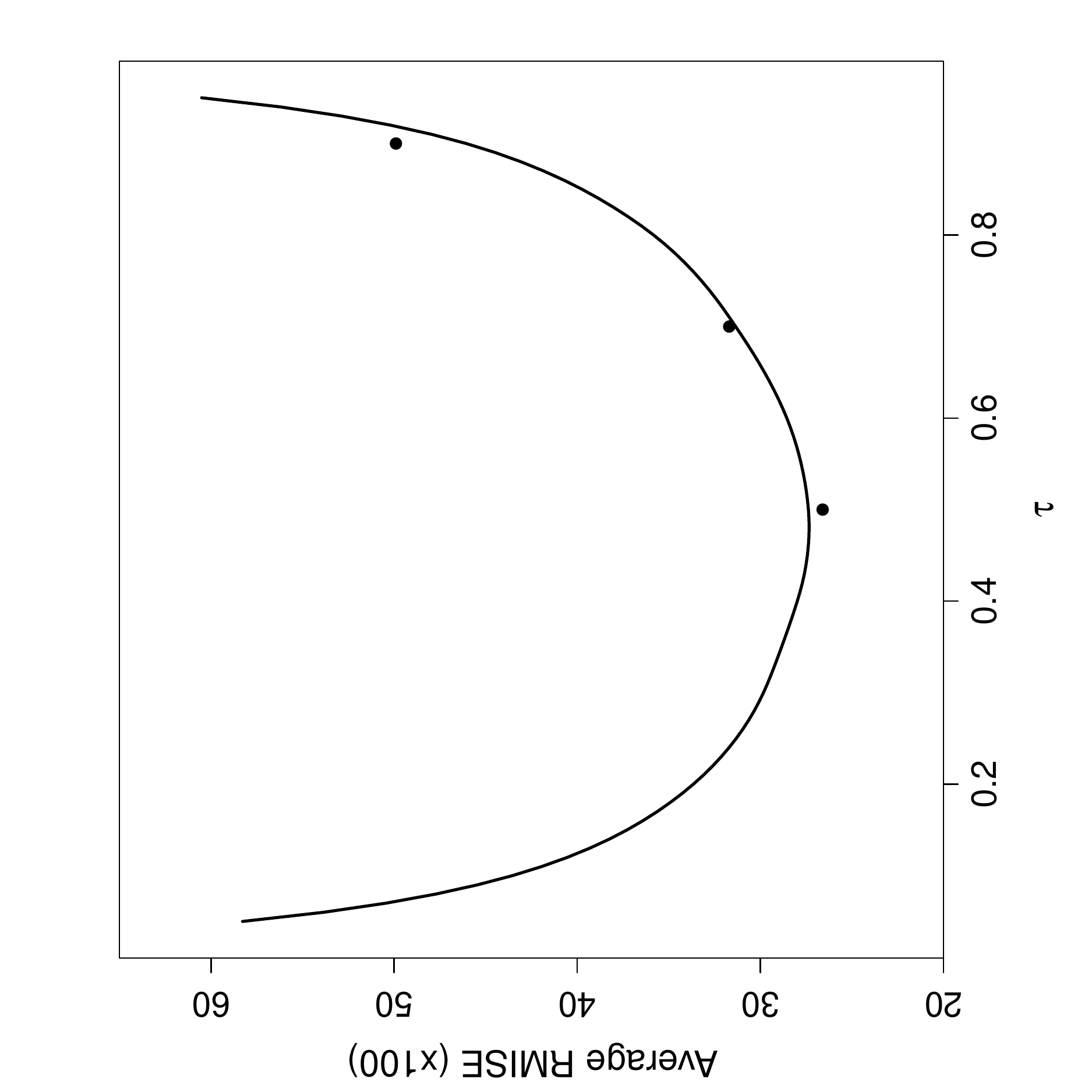}}
    \caption{Average RMISE ($\times 100$) over 500 data sets at  $\tau=0.05,0.06,\dots,0.95$. Results for Bondell's method for $\tau=0.5,0.7,0.9$ are from \cite{Bondell2010}.}
    \label{sim_splines}
\end{figure}

For Design 5 and $\tau=0.5,0.7,0.9$, the magnitude of the standard errors of average RMISE of both methods are below 0.5 (standard error plots omitted), suggesting that regression adjustment has significantly smaller RMISE than the constrained minimization approach. Whereas for the complex covariance function brought by Design 6, both methods have similar performance. Furthermore, as the regression curves are, in general, expected to be smooth, cubic splines interpolation is preferable than linear splines, see Figure \ref{sim_splines_xy}. However, cubic splines is not supported by \cite{Bondell2010}.

\begin{figure}[!ht]
    \centering
    \subfloat[\footnotesize{Design 5} \label{des5.xy}]{\includegraphics[width=7.5cm,height=7.5cm,angle=-90]{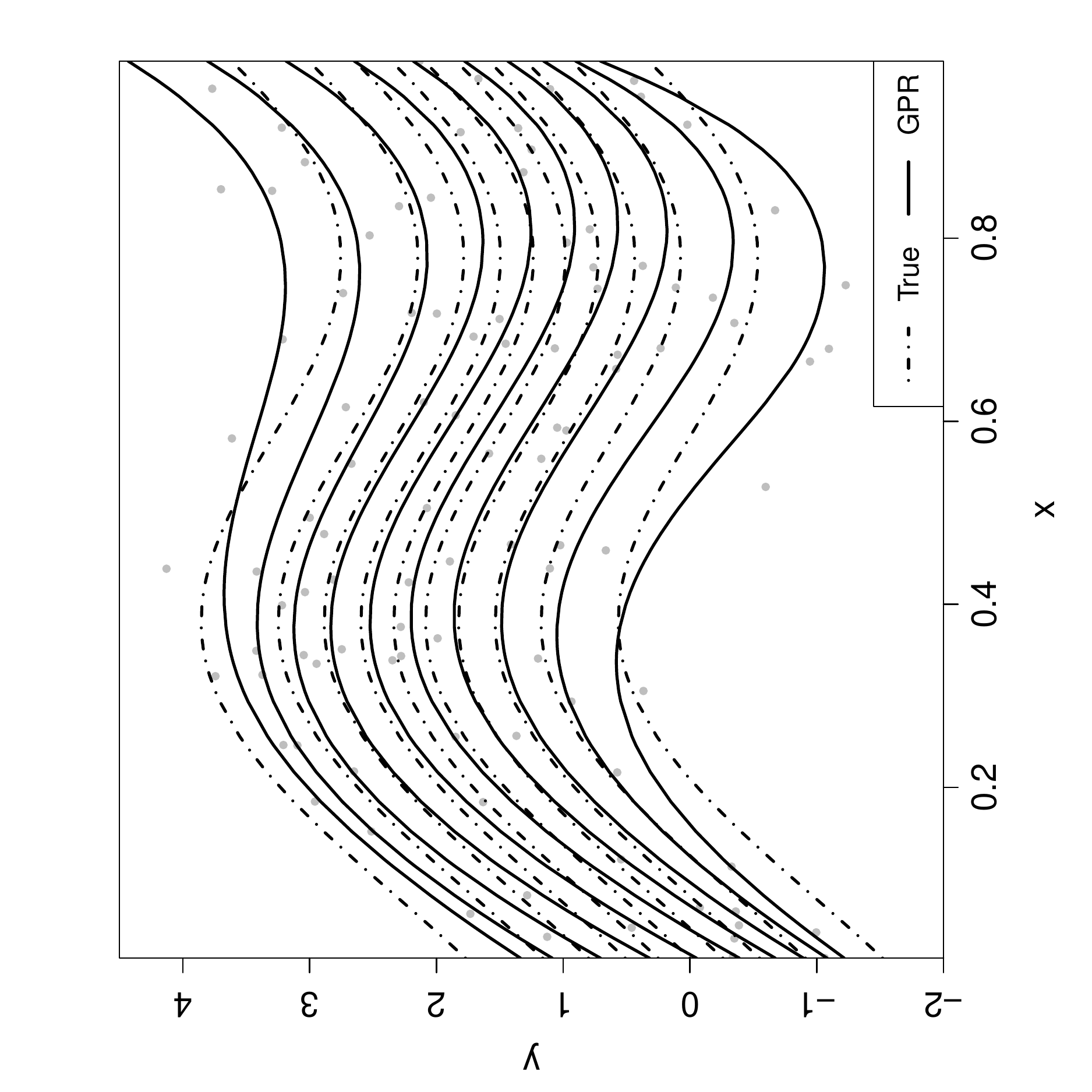}}
    \subfloat[\footnotesize{Design 6} \label{des6.xy}]{\includegraphics[width=7.5cm,height=7.5cm,angle=-90]{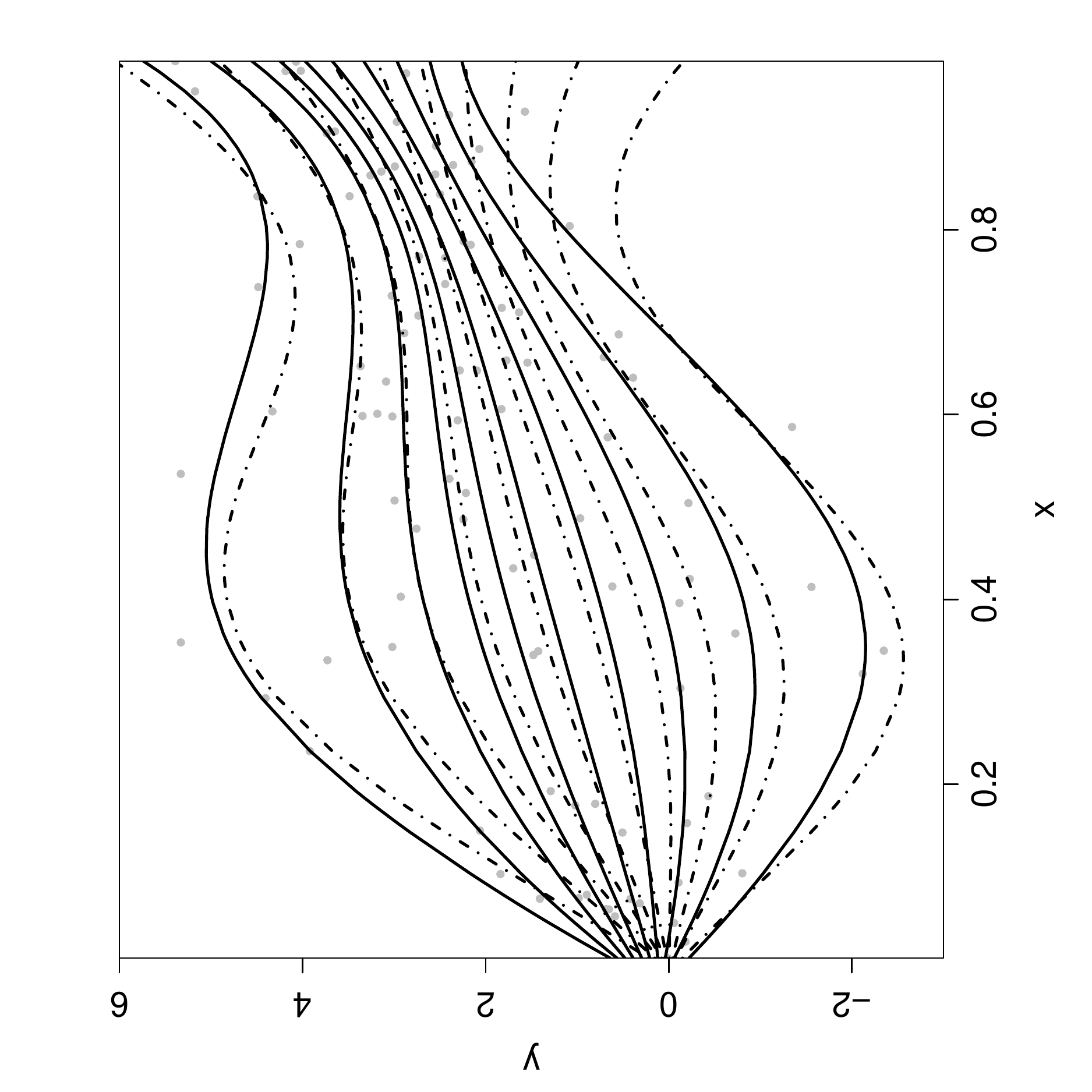}}
    \caption{True and estimated conditional quantile functions for quantile levels $\tau=0.05,0.15,\dots,0.95$.}
    \label{sim_splines_xy}
\end{figure}

Therefore, the two stage approach allows for more flexibility to handle any nonparametric quantile function in the first stage, and results are better than, or at least similar to, competitive approaches.

\section{Real data applications}
Quantile regression is widely used in medicine and environmental sciences. Both applications will be explored in this section through two real data examples.

\subsection{Immunoglobulin-G data set}
Centile charts are adopted in medicine to establish reference ranges in order to identify unusual subjects. However, as interest lies in estimating many quantiles of the response distribution, the estimates often cross.
In the search for reference ranges to help diagnose immunodeficiency in infants, \cite{Isaac1983} measured the serum concentration of immunoglobulin-G (IgG) in 298 preschool children. This famous data set will be used here to estimate IgG conditional quantile levels $\tau=0.005, 0.01, ..., 0.995$. A quadratic model in age is used to fit the data due to the expected smooth change of IgG with age. See \cite{Isaac1983}, \cite{Yu2001} and \cite{Kottas2009}.

Figure \ref{igg} presents the results before and after the Gaussian process regression adjustment. For ease of visualization, fewer quantile curves are drawn from $\tau=0.05$ to $\tau=0.95$ (step size equals 0.05), whereas at the tails, where most of the crossing occurs, all estimated quantiles are plotted. The highlighted curves have quantile levels identified on its right side for reference purposes.

\begin{figure}[!ht]
    \centering
    \subfloat[\footnotesize{Standard Bayesian quantile regression} \label{igg2}]{\includegraphics[width=7.5cm,height=7.5cm,angle=-90]{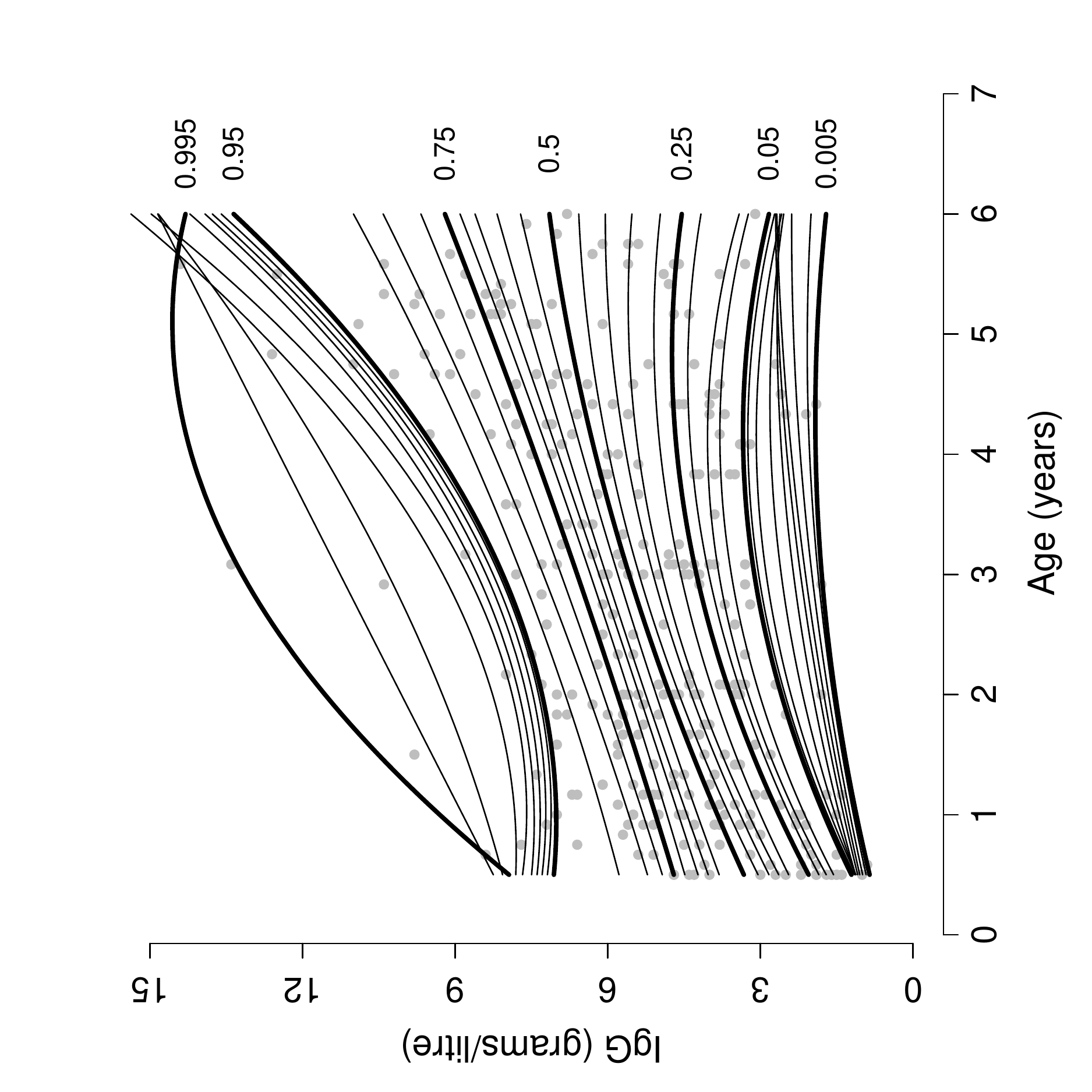}}
    \subfloat[\footnotesize{Gaussian process regression adjustment} \label{igg3}]{\includegraphics[width=7.5cm,height=7.5cm,angle=-90]{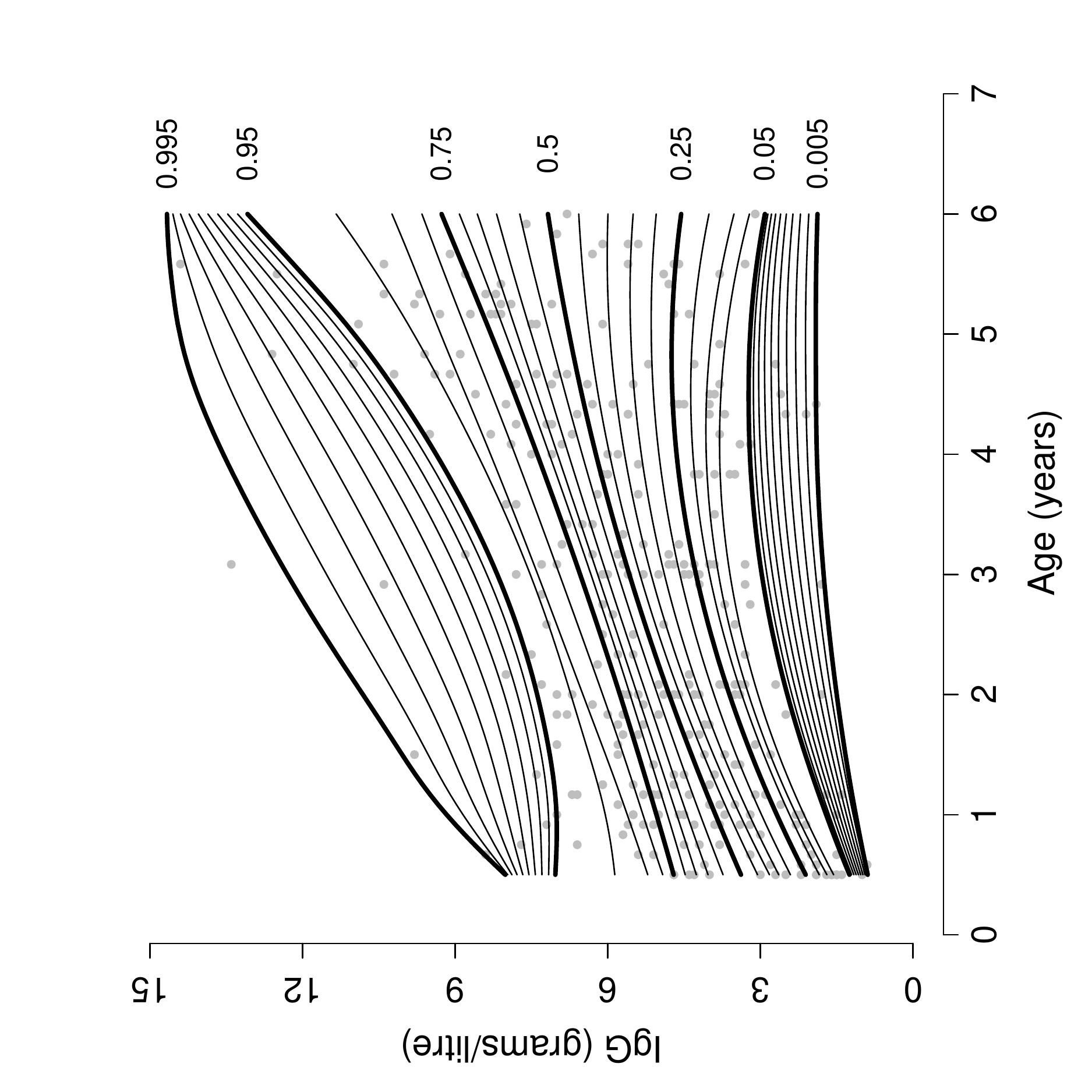}}
    \caption{Growth chart of serum concentration of immunoglobulin-G for young children.}
    \label{igg}
\end{figure}

The independent fitting of standard Bayesian quantile regression (Figure \ref{igg2}) displays many crossing curves. Furthermore, the comparison of quantile curve estimate for $\tau=0.995$ and its neighbour $\tau=0.99$ demonstrates the consequences of not borrowing strength. Although both quantile levels are remarkably close, we get very different estimates when fitting them separately, which does not look very realistic and leads to the extreme case of crossing. The Gaussian process regression adjustment corrects the crossing by borrowing information from nearby quantiles (Figure \ref{igg3}). Therefore, the final estimates not only respect the monotonicity constraint, but are smoother than the initial ones. The estimated conditional quantile function for children with 6 years old, previously presented in Figure \ref{igg1}, evidences the monotonicity and smoothing effect of the two-stage approach. Hence, better quantile estimates are provided here, without compromising the simplicity or flexibility of the standard Bayesian approach.

\subsection{Global mean sea level variation data set}
From satellite radar altimeters measurements, \cite{Nerem2010} constructed a climate data record of global mean sea level change ($\Delta$MSL), which is defined as ``the area-weighted mean of all of the sea surface height anomalies measured by the altimeter in a single, 10-day satellite track repeat cycle''. The data set consists of $762$ observations dating from 1992 to 2014 (for data processing details see \cite{Nerem2010}).

Sea level rise is closely related to climate change and its study is paramount, specially for habitants of coastal and island regions, where extreme events are likely to become more frequent. Quantile regression can provide a greater picture of the global mean sea level distribution whilst modelling several quantile levels. For this study, we used the regression adjustment method to estimate 19 regression curves, $\tau=0.05, 0.10, ..., 0.95$. In the first stage (Figure \ref{seai}), nonparametric Bayesian quantile regression was fitted using cubic splines. Only the first decade is presented for a better visualisation of the crossing issue. We used the model described by \citep{Yanan2012}, where regression splines of any degree can be fitted, with unknown knot number and location, based on the ALD distribution, codes were provided by the authors. Alternative methods for spline fitting can also be found in \cite{chenyu09} and \cite{thompsoncmrs10}. 

\begin{figure}[!ht]
    \centering
    \subfloat[\footnotesize{Standard Bayesian quantile regression} \label{seai}]{\includegraphics[width=7.5cm,height=7.5cm,angle=-90]{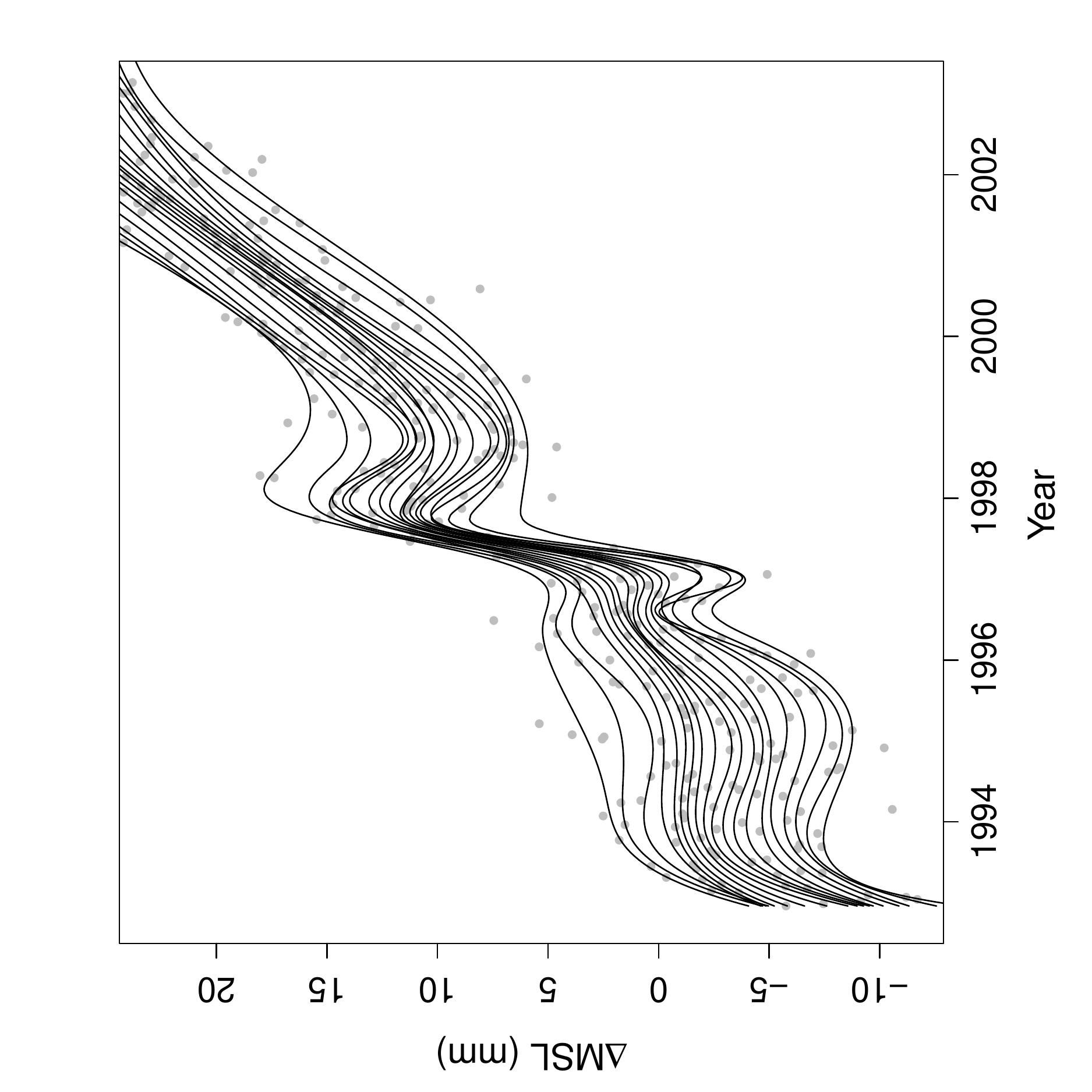}}
    \subfloat[\footnotesize{Gaussian process regression adjustment} \label{seaf}]{\includegraphics[width=7.5cm,height=7.5cm,angle=-90]{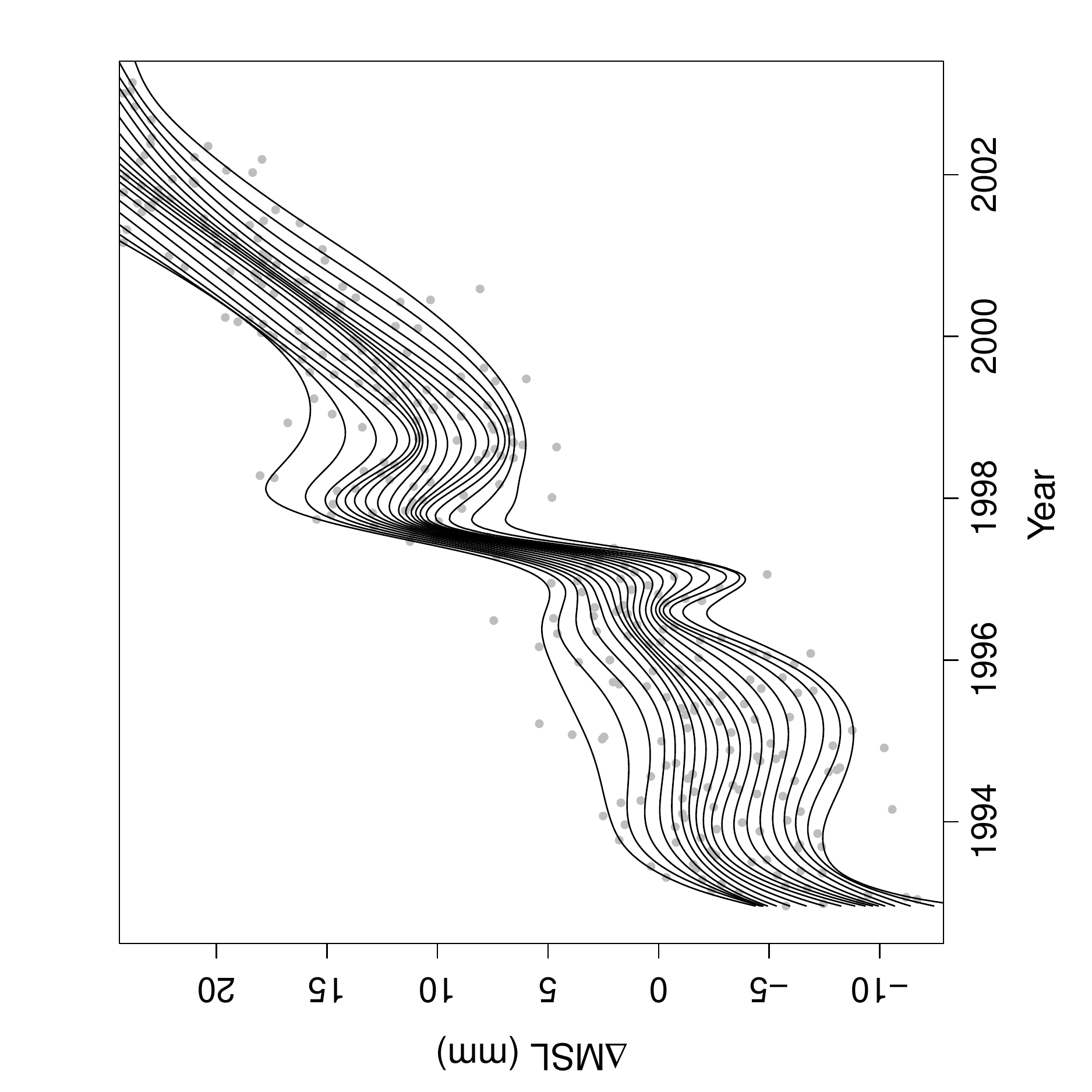}}
    \caption{Quantile regression of global mean sea level variation for $\tau=0.05, 0.10, ..., 0.95$.}
    \label{sea}
\end{figure}

From Figure \ref{seai}, we notice that, as a consequence of the nonparametric fit, crossing becomes much more likely to occur. After adding the monotonicity constraint, the regression adjustment provides improved estimates by getting rid of the crossing and adding some nice smoothness (Figure \ref{seaf}). This study corroborates that sea level is overall increasing since 1992, although the rate of the change is not constant over the period (plots for the second decade were ommited, but increasing pattern persists and $\Delta$MSL reaches 70 millimeters in 2014). As seasonal variations of the sea level were subtracted from this data set, the oscillating inter-annual periods are generally associated with El Nino and La Nina events \citep{Nerem2010}. Quantile modelling adds important information to the monitoring process of global sea levels and ultimately helps to mitigate the impact of extreme events.

\section{Concluding remarks}
This paper presents a two-stage approach to noncrossing Bayesian quantile regression. Following \cite{Yu2001}, the asymmetric Laplace distribution is used in the first stage as an auxiliary likelihood to fit several quantiles separately. A better exploration of this set of auxiliary fittings is undertaken in the second stage, in which the initial estimates are adjusted by borrowing strength from nearby quantiles using Gaussian process regression. The procedure is computationally simple as initial quantiles are fitted independently and no MCMC is needed in the second stage. Monotonicity constraints are also handled easily through the estimation of a single parameter. In addition, final estimates reduces to the initial ones if noncrossing exists and posterior consistency is maintained in case it holds initially. 

Simulation studies demonstrate the flexibility of the two-stage approach in modelling complex quantile functions, providing in general smaller RMISE than alternative methods. Moreover, it handles both linear and non-linear quantile regression curves. Although an approximation to the covariance matrix is needed in the linear case, its performance is very satisfactory. The possibility to choose any nonparametric curve to model the data again shows the great versatility of the proposed method. Indeed the algorithm can be modified to accommodate models other than the ALD in the first stage. Additional smoothness of the quantile estimates is another attractive property as illustrated in real data analyses. 

Being an adjustment of initial estimates, the performance of the proposed and standard approaches are naturally related. Flexibility and low RMISE, for instance, are great features directly incorporated. However, caution is needed in the case where the initial estimates perform too badly, which can occur when sample size is small and, consequently, extreme quantiles are fitted very poorly. The two-stage approach will still correct the crossing issues, but at a cost of an undesirable high bandwidth. In these circumstances, simultaneous quantile estimation should be preferred.

\subsection*{Acknowledgements}
TR is funded by the CAPES Foundation via the Science Without Borders program (BEX 0979/13-9).

\section{Appendix}

\begin{proof}[Proof of Proposition 2]
Let $q_{pt}$ be a short notation for observed induced quantile sample $Q^{(t)}(\tau|\mathbf{x},p_p)$ and $\bar{q}_p$ be the observed induced quantile posterior mean $\widehat{Q}_s(\tau|\mathbf{X},p_p)$. Then, the likelihood function for the model with all MCMC samples \eqref{gp_model} can be factorized as follows
\begin{equation*}
\begin{aligned}
p(\mathbf{q}_{pt}|\mathbf{g}(p_p), \boldsymbol{\sigma}^2_p) &= \prod_{p=1}^{P} \prod_{t=1}^{T}\frac{1}{\sqrt{2\pi\sigma_p}} \exp{ \left\{ - \frac{1}{2\sigma^2_p}\left( q_{pt} - g(p_p) \right) ^2 \right\} } \\
&= \prod_{p=1}^{P} (2\pi\sigma_p)^{-\frac{T}{2}} \exp{ \left\{ - \frac{1}{2\sigma^2_p} \sum_{t=1}^{T} {\left( q_{pt} - g(p_p) \right) ^2 } \right\} } \\
&= \prod_{p=1}^{P} (2\pi\sigma_p)^{-\frac{T}{2}} \exp{ \left\{ - \frac{1}{2\sigma^2_p} \sum_{t=1}^{T} {\left( q_{pt} - \bar{q}_p \right) ^2 } \right\} } \prod_{p=1}^{P} \exp{ \left\{ - \frac{T}{2\sigma^2_p} \left( g(p_p) - \bar{q}_p \right) ^2 \right\} }.
\end{aligned}
\end{equation*}
Therefore Fisher-Neyman factorization theorem implies that induced quantile posterior means $T(\mathbf{q}_{pt})=(\bar{q}_1, \dots, \bar{q}_P$) are jointly sufficient for $(g(p_1), \dots, g(p_P))$. Consequently, the likelihood for the model with all MCMC observations is proportional to the likelihood for the model with induced quantile posterior means, i.e. $p(\mathbf{q}_{pt}|\mathbf{g}(p_p), \boldsymbol{\sigma}^2_p) \propto p(\bar{\mathbf{q}}_p|\mathbf{g}(p_p), \boldsymbol{\sigma}^2_p)$. Furthermore, as the priors for $g(p)$ are the same for both models, then they have the same posterior and predictive posterior distributions for $g(p)$. 

Therefore, considering that the predictive posterior distribution for $g'(p{=}\tau)$ is $\mathcal{N} (\mu', {\sigma'}^2)$, so is the predictive distribution for $g(p{=}\tau)$. Lastly, as $Q_*(\tau|x,p{=}\tau,q_{pt}) = g_*(p{=}\tau) + \epsilon$, the predictive posterior distribution for $Q(\tau|x,p)$ is $Q_*(\tau|x,p{=}\tau,q_{pt}) \sim \mathcal{N} (\mu', {\sigma'}^2+ \sigma^2(\tau|\mathbf{X},p{=}\tau))$.
\end{proof}

\bibliography{bibli}

\end{document}